\theoremstyle{definition}
\newtheorem{theorem}{Theorem}
\newtheorem{lemma}[theorem]{Lemma}
\newtheorem{proposition}[theorem]{Proposition}
\newtheorem{corollary}[theorem]{Corollary}
\newtheorem{example}[theorem]{Example}
\newtheorem{definition}[theorem]{Definition}
\newcommand{\shortrules}[6]{\noindent\begin{minipage}{#6ex}{\bfseries #1}\end{minipage} $\;$ #2 $\;\Rightarrow_{\text{#5}}\;$ #3 \par\smallskip\noindent #4}
\newcommand{\comp}{\operatorname{comp}} 
\newcommand{\mGnd}{\operatorname{gnd}} 
\newcommand{\mMGU}{\operatorname{mgu}} 
\newcommand{\dom}{\operatorname{dom}}
\newcommand{\cdom}{\operatorname{codom}}
\newcommand{\forgd}{\operatorname{fgd}} 
\newcommand{\backgd}{\operatorname{bgd}} 
\newcommand{\adiff}{\operatorname{adiff}} 
\newcommand{\SCL}{{\text{SCL}}}
\newcommand{\SCLT}{{\text{\SCL(T)}}}
\theoremstyle{definition}
\newcommand{\ie}{i.e.\ }
\newcounter{typonote}
\newcounter{sidenote}
\newcommand{\MEMO}[1]{}
\newcommand{\con}{\operatorname{con}}
\newcommand{\sig}{\Sigma}
\newcommand{\opers}{\Omega}
\newcommand{\preds}{\Pi}
\newcommand{\term}{T}
\newcommand{\sigval}{\mathcal{A}}
\newcommand{\theory}{\mathcal{T}}
\newcommand{\sorts}{\mathcal{S}}
\newcommand{\hspec}{\mathcal{H}}
\newcommand{\bspec}{\mathcal{B}}
\newcommand{\fspec}{\mathcal{F}}
\newcommand{\modset}{\mathcal{C}}
\newcommand{\varset}{\mathcal{X}}
\newcommand{\Rat}{\mathbb{Q}}
\newcommand{\Nat}{\mathbb{N}}
\newcommand{\pnat}{\ensuremath{\operatorname{Nat}}}
\newcommand{\vars}{\ensuremath{\operatorname{vars}}}
\newcommand{\atoms}{\ensuremath{\operatorname{atoms}}}
\newcommand{\arity}{\operatorname{arity}}
\newcommand{\LRA}{\ensuremath{\operatorname{LRA}}}
\newcommand{\BS}{\ensuremath{\operatorname{BS}}}
\title{{\SCL} with Theory Constraints}
\author{Martin Bromberger\\ Max Planck Institute for Informatics\\ Saarland Informatics Campus Saarbr\"ucken\\  \and
  Alberto Fiori\\ Max Planck Institute for Informatics\\ Saarland Informatics Campus Saarbr\"ucken and\\
  Graduate School of Computer Science Saarbr\"ucken, Germany\\ \and
  Christoph Weidenbach \\ Max Planck Institute for Informatics\\ Saarland Informatics Campus, Saarbr\"ucken
	Germany\\}
\begin{document}

\maketitle

\begin{abstract}
We lift the SCL calculus for first-order logic without equality
to the SCL(T) calculus
for first-order logic without equality modulo a background theory. 
In a nutshell, the SCL(T) calculus describes a new way to guide hierarchic resolution inferences 
by a partial model assumption instead of an \textit{a priori} fixed order as done for instance in hierarchic superposition. 
The model representation consists of ground background theory literals
and ground foreground first-order literals.
One major advantage of the model guided approach is that clauses generated by SCL(T) enjoy a non-redundancy property
that makes expensive testing for tautologies and forward subsumption completely obsolete. 
SCL(T) is a semi-decision procedure for pure clause sets that
are clause sets without first-order function symbols ranging into the
background theory sorts. Moreover, SCL(T) can be turned into a decision
procedure if the considered combination of a first-order logic modulo
a background theory enjoys an abstract finite model property. 
\end{abstract}

\section{Introduction}

The combination of first-order logic reasoning with theories is still a challenge. More specifically,
there are many open problems left if first-order clauses containing universally quantified variables
are combined with theories such as (linear) arithmetic. While the ground case of such a combination
is known to be solvable by SMT~\cite{NieuwenhuisEtAl06},
hierarchic superposition~\cite{BachmairGanzingerEtAl94,kruglov2012superposition},
the sequent calculus~\cite{Ruemmer08}, or
model evolution~\cite{BaumgartnerEtAl08},
the existence of universally quantified variables poses a number of new challenges.

For instance, the combination of linear rational arithmetic (LRA) with the Bernays Schoenfinkel fragment of first-order
logic ($\BS(\LRA)$) is already undecidable if we allow a single, free constant ranging into the arithmetic sort.~\cite{Downey1972,HorbachEtAl17ARXIV}.
This can be shown by encoding the halting problem of a 2-counter machine~\cite{Minsky67}.
In Section~\ref{sec:prelim} we show another example, Example~\ref{exam:noncompalpure}, where we force a predicate to contain exactly the natural numbers.
The natural numbers cannot be defined in the first-order theory of LRA without extra first-order predicates, nor in pure first-order logic.
For a number of universally quantified fragments there
exist complete methods and some fragments are even decidable~\cite{GeMoura09,kruglov2012superposition,Voigt17Frocos}. In addition, completeness
for undecidable fragments can be obtained. For example, if the first-order part of $\BS(\LRA)$
consists only of variables and predicates, then hierarchic superposition is refutationally complete~\cite{BachmairGanzingerEtAl94}.
Already the addition of a single, free constant ranging into the arithmetic sort may destroy refutational completeness,
if such a constant cannot be mapped to a background theory domain constant, see Example~\ref{exam:noncompalpure}.

In this paper we introduce a new calculus $\SCL(T)$ for the combination of a background theory
with a foreground first-order logic without equality. As usual in a hierarchic setting, we assume the
background theory to be term-generated and compact. In this paper we only consider \emph{pure} clause sets where the
only symbols occurring in the clause set from the foreground logic are predicates and variables. As a running example,
we present the combination of linear rational arithmetic (LRA) with the Bernays Schoenfinkel fragment of first-order
logic ($\BS(\LRA)$) where for simplicity we only consider constants from LRA. In Section~\ref{sec:prelim}, we discuss that this does
not restrict expressiveness.

For example, for a clause set out of $\BS(\LRA)$ where we write clauses in a constraint form $\Lambda\| C$ where $\Lambda$ is a conjunction of $\LRA$ atoms
of the form $x = k\cdot y$ for $k\in\Nat$ and $\LRA$ sort variables $x,y$, or atoms $x>0$, and the first-order part $C$ only
contains a single monadic predicate $P$ applied to some variable, unsatisfiability is already undecidable~\cite{Downey1972,HorbachEtAl17ARXIV}. This can be shown
by encoding the halting problem of a 2-counter machine~\cite{Minsky67}.

For a further example, the two clauses\newline
\centerline{$x = 0 \| \pnat(x) \quad y = x + 1\| \neg \pnat(x) \lor \pnat(y)$}
\noindent
belong to our fragment as well. For any $\BS(\LRA)$ algebra satisfying the two clauses, the predicate $\pnat$ contains
the natural numbers. In Section~\ref{sec:prelim} we will even show that it can be forced to contain exactly the natural numbers by
adding further clauses from our fragment, Example~\ref{exam:noncompalpure}. 

In contrast to superposition based approaches to hierarchic reasoning~\cite{BachmairGanzingerEtAl94,kruglov2012superposition,BaumgartnerWaldmann19}
where inferences are restricted by an \emph{a priori} ordering, our idea is to select inferences via a partial model assumption,
similar to CDCL~\cite{MSS96,BayardoSchrag97} and respective calculi using an explicit model assumption to direct
reasoning in first-order logic~\cite{BaumgartnerFuchsTinelli06,PiskacEtAl10,Korovin13,TeuckeWeidenbach15,AlagiWeidenbach15,BonacinaPlaisted16}.
In particular, we want to lift our previously obtained
results for model-based reasoning in first-order logic~\cite{FioriWeidenbach19} to first-order logic modulo a background theory.
There, models are ground and either extended by guessing an undefined ground literal or by propagating
a new ground literal with respect to a clause and the current ground model assumption. 
The two clauses $x = 0 \| \pnat(x)$,  $y = x + 1\| \neg \pnat(x) \lor \pnat(y)$ enable already infinitely many propagations\newline
\centerline{$\pnat(0)$, $\pnat(1)$, \;$\pnat(2)$, $\ldots$.}
\noindent
Therefore, exhaustive
propagation cannot be permitted. On the other hand, exhaustive propagation guarantees that learned clauses are not redundant with respect to
an adopted superposition redundancy notion~\cite{Weidenbach15,FioriWeidenbach19}. Restricting a calculus to non-exhaustive propagation while having non-redundancy properties
for learned clauses is difficult to obtain. One reason why model-based reasoning for LIA is currently inferior compared to classical branch and bound approaches
used in SMT~\cite{NieuwenhuisEtAl06,BrombergerEtAl15} is a missing non-redundancy guarantee for inferred inequations. This is partly a result of the fact that
exhaustive propagation does not terminate in LIA as well and is therefore not permitted. Our solution in this paper is to restrict model assumptions
to finite ground models, similar to \cite{FioriWeidenbach19}. The models are build with respect to a fixed, finite set $B$ of foreground constants of the
arithmetic sort. Propagations are not exhaustive but restricted to a finite number, typically not exhausting $B$. If $B$ is not grown, the proposed calculus $\SCLT$ always
terminates. Either by finding a contradiction or finding a model with respect to $B$. Of course, this (finite) model needs not to be extendable to a model of the
clause set. Satisfiability is undecidable for pure clause sets. Still for some further restricted fragments of pure clause sets the approach can be turned
into a decision procedure, see Section~\ref{sec:extdisc}. In general, the set $B$ has to be extended during the search for a refutation. Nevertheless, we can prove
that learned clauses in $\SCL(T)$ are always non-redundant with respect to an adopted superposition redundancy notion, Lemma~\ref{lemm:non-red}. The notion includes subsumption
of constraint clauses and semantic tautologies. That means, $\SCL(T)$ will never learn a true clause, nor a clause that is subsumed by an existing clause.
Restricting the model representation language to ground literals of the background and foreground logic enables detecting false clauses or propagating clauses efficiently by SMT. 
Still, the $\SCL(T)$ calculus is sound, Lemma~\ref{lem:sclsound}, and refutationally complete, Theorem~\ref{theo:groununsatcomple}, i.e.,
a regular strategy is guaranteed to find a refutation for pure clause sets if it exists.

\paragraph{Related Work:} In contrast to variants of hierarchic superposition~\cite{BachmairGanzingerEtAl94,kruglov2012superposition,BaumgartnerWaldmann19}
$\SCLT$ selects clauses via a partial model assumption and not via an ordering. This has the advantage that $\SCLT$ does not generate redundant clauses.
It lacks the implicit model building capabilities of hierarchic superposition. Instead it generates finite models candidates that need to be extended
to overall models of the clause set, see Example~\ref{exa:modelextract}.
One way to deal with universally quantified variables in an SMT
setting is via instantiation~\cite{GeMoura09,ReynoldsEtAl18}. This has shown to be practically useful in many applications. It typically comes without completeness
guarantees and it does not learn any new constrained clauses. An alternative is to combine SMT techniques with superposition~\cite{MouraB08} where the ground
literals from an SMT model assumption are resolved by superposition with first-order clauses. $\SCL(T)$ does not resolve with respect to its ground model
assumption but on the original clauses with variables. Background theories can also be build into first-order superposition in a kind of lazy way. This direction
has been followed by SPASS+T~\cite{PrevostoWaldmann06} and Vampire~\cite{KovacsVoronkov13}. The idea is to axiomatize part of the background theory in
first-order logic and to direct ground literals of the background theory to SMT solver. Also this approach has shown to be practically useful but comes
without any completeness guarantees and generated clauses may be redundant. Model evolution~\cite{BaumgartnerFuchsTinelli06} has also been extended
with linear integer arithmetic~\cite{BaumgartnerEtAl08} where universally quantified integer variables are finitely bound from the beginning.
A combination of first-order logic with linear integer arithmetic has also been built into a sequent calculus~\cite{Ruemmer08} that operates
in the style of a free-variable tableau calculus with incremental closure. No new clauses are learned.

\paragraph{Organization of the Paper:} After a section fixing notation, notions and some preliminary work, Section~\ref{sec:prelim},
the following Section~\ref{sec:sclrulesprop} introduces the $\SCLT$ calculus and proves its properties. The final Section~\ref{sec:extdisc}
discusses extensions to model building, further improvements and summarizes the obtained results.


\section{Preliminaries} \label{sec:prelim}

\paragraph{Many-Sorted First-Order Logic without Equality:} A \emph{many-sorted signature} $\sig = (\sorts,\opers, \Pi)$ is a triple consisting of a finite, non-empty 
set $\sorts$ of \emph{sort symbols}, a non-empty set
$\opers$ of \emph{operator symbols} (also called \emph{function symbols}) over $\sorts$ and a finite set 
$\Pi$ of \emph{predicate symbols} over $\sorts$.
For every sort from $\sorts$ there is at least one constant symbol in $\opers$ of this sort.
First-order terms, atoms, literals, clauses, formulas and substitutions are defined in the
usual many-sorted way where an additional infinite set $\varset$ of variables is assumed, such that
for each sort from $\sorts$ there are infinitely many variables of this sort in  $\varset$.
For each sort $S\in\sorts$, $\term_{S}(\sig,\varset)$ denotes the set of all terms of sort $S$
and $\term_{S}(\sig)$ the set of all ground terms of sort $S$.

For notation, $a, b, c$ are constants from $\opers$,
$w, x, y, z$ variables from $\varset$, and if we want to emphasize the sort
of a variable, we write $x_S$ for a variable of sort $S$;
$t, s$ denote terms, $P, Q, R$ predicates from $\preds$, $A, B$ atoms,
$L, K, H$ denote literals, $C, D$ denote clauses, and $N$ denotes a clause set. For substitutions
we write $\sigma, \delta, \rho$. Substitutions are well-sorted: if $x_s\sigma = t$ then $t\in \term_{S}(\sig,\varset)$,
they have a finite domain $\dom(\sigma) = \{ x \mid x\sigma \neq x\}$ and their codomain is denoted by $\cdom(\sigma) = \{x\sigma\mid x\in\dom(\sigma)\}$.
The application of substitutions is homomorphically extended to non-variable terms, atoms, literals, clauses, and formulas.
The complement of a literal is denoted by the function $\comp$.
For a literal $L$, $|L|$ denotes its respective atom. The function $\atoms$ computes the
set of atoms from a clause or clause set.
The function $\vars$ maps terms, literals, clauses to their respective set of contained variables.
The function $\con$ maps terms, literals, clauses to their respective set of constants.
A term, atom, clause, or a set of these objects is \emph{ground}
 if it does not contain any variable, i.e., the function $\vars$ returns the empty set.
A substitution $\sigma$ is \emph{ground} if $\cdom(\sigma)$ is ground.
A substitution $\sigma$ is \emph{grounding} for a term $t$, literal $L$,
 clause $C$ if $t\sigma$, $L\sigma$, $C\sigma$ is ground, respectively.
The function $\mGnd$ computes the set of all ground instances of
 a literal, clause, or clause set. Given a set of constants $B$,
 the function $\mGnd_B$ computes the set of all ground instances
 of a literal, clause, or clause set where the grounding is restricted to use constants from $B$.
 The function $\mMGU$ denotes the \emph{most general unifier} of two terms, atoms, literals.
 As usual, we assume that any $\mMGU$ of two terms or literals does not introduce any fresh variables and is idempotent.

The semantics of many-sorted first-order logic is given by the notion of an algebra:
let $\sig = (\sorts, \opers, \preds)$ be a many-sorted signature. 
A \emph{$\sig$-algebra} $\sigval$, also called \emph{$\sig$-interpretation}, is a mapping that assigns
(i)~a non-empty carrier set $S^{\sigval}$ to every sort $S \in \sorts$, 
so that $(S_1)^{\sigval} \cap (S_2)^{\sigval} = \emptyset$ for any distinct sorts $S_1, S_2 \in \sorts$,
(ii)~a total function $f^{\sigval} : (S_1)^{\sigval} \times \ldots \times (S_n)^{\sigval} \rightarrow (S)^{\sigval}$ to 
every operator $f \in \opers$, $\arity(f)=n$ where $f: S_1 \times \ldots \times S_n \rightarrow S$,
(iii)~a relation $P^{\sigval} \subseteq ((S_1)^{\sigval} \times \ldots \times (S_m)^{\sigval})$ to every predicate symbol $P \in \Pi$ with $\arity(P)=m$.
The semantic entailment relation $\models$ is defined in the usual way. 
We call a $\sig$-algebra $\sigval$ \emph{term-generated} if 
$\sigval$ fulfills the following condition: whenever $\sigval$ entails all groundings $C\sigma$ of a clause $C$ (i.e., $\sigval \models C\sigma$ for all grounding substitutions $\sigma$ of a clause $C$), then $\sigval$ must also entail $C$ itself (i.e., $\sigval \models C$).

\paragraph{Hierarchic Reasoning:} Starting point of a hierarchic reasoning~\cite{BachmairGanzingerEtAl94,BaumgartnerWaldmann19} is a background
theory $\theory^{\bspec}$ over a many-sorted signature  $\sig^{\bspec} = (\sorts^{\bspec},\opers^{\bspec}, \preds^{\bspec})$
and a non-empty set of term-generated $\sig^{\bspec}$-algebras $\modset^{\bspec}$: $\theory^{\bspec} = (\sig^{\bspec}, \modset^{\bspec})$.
A constant $c\in\opers^{\bspec}$ is called a \emph{domain constant} if $c^{\sigval}\neq d^{\sigval}$ for all $\sigval\in\modset^{\bspec}$ and for all $d\in\opers^{\bspec}$ with $d\neq c$.
The background theory is then extended via a foreground signature $\sig^{\fspec} = (\sorts^{\fspec},\opers^{\fspec}, \preds^{\fspec})$
where $\sorts^{\bspec}\subseteq \sorts^{\fspec}$, $\opers^{\bspec}\cap \opers^{\fspec} = \emptyset$, and $\preds^{\bspec}\cap \preds^{\fspec} = \emptyset$.
Hierarchic reasoning is based on a background theory $\theory^{\bspec}$ and
a respective foreground signature $\sig^{\fspec}$: $\hspec = (\theory^{\bspec},  \sig^{\fspec})$. It has its
associated signature $\sig^{\hspec} = (\sorts^{\fspec}, \opers^{\bspec}\cup\opers^{\fspec}, \preds^{\bspec}\cup \preds^{\fspec})$
generating \emph{hierarchic}  $\sig^{\hspec} $-algebras. A $\sig^{\hspec}$-algebra $\sigval$ is called \emph{hierarchic} with respect to its background theory $\theory^{\bspec}$,
if $\sigval^{\hspec}|_{\sig^{\bspec}} \in \modset^{\bspec}$. As usual, $\sigval^{\hspec}|_{\sig^{\bspec}}$ is obtained from a $\sigval^{\hspec}$-algebra
by removing all carrier sets $S^{\sigval}$ for all $S\in(\sorts^{\fspec}\setminus\sorts^{\bspec})$, all functions from $\opers^{\fspec}$
and all predicates from $\preds^{\fspec}$.
We write $\models_\hspec$ for the entailment relation with respect to hierarchic algebras and formulas from  $\sig^{\hspec}$
and $\models_\bspec$ for the entailment relation with respect to the $\modset^{\bspec}$ algebras and formulas from $\sig^{\bspec}$.

Terms, atoms, literals build over $\sig^{\bspec}$ are called \emph{pure background terms}, 
\emph{pure background atoms}, and
\emph{pure background literals}, respectively.
All terms, atoms, with a top-symbol from $\opers^{\bspec}$ or $\preds^{\bspec}$, respectively, are called \emph{background terms}, \emph{background atoms}, respectively. 
A background atom or its negation is a \emph{background literal}. 
All terms, atoms, with a top-symbol from $\opers^{\fspec}$ or $\preds^{\fspec}$, respectively, are called \emph{foreground terms}, \emph{foreground atoms}, respectively. 
A foreground atom or its negation is a \emph{foreground literal}.
Given a set (sequence) of $\hspec$ literals, the function $\backgd$ returns the set (sequence) of background literals
and the function $\forgd$ the respective set (sequence) of foreground literals.
A substitution $\sigma$ is called \emph{simple} if $x_S\sigma\in\term_{S}(\sig^{\bspec},\varset)$ for
all $x_S\in\dom(\sigma)$ and $S\in\sorts^{\bspec}$.

As usual, clauses are disjunctions of literals with implicitly universally quantified variables. We often write a $\sig^{\hspec}$ clause
as a \emph{constrained clause}, denoted $\Lambda\parallel C$ where $\Lambda$ is a conjunction of background literals
and $C$ is a disjunction of foreground literals semantically denoting the clause $\neg \Lambda\lor C$.
A \emph{constrained closure}  is denoted as $\Lambda\parallel C\cdot\sigma$ where $\sigma$ is grounding for $\Lambda$ and $C$.
A constrained closure $\Lambda\parallel C\cdot\sigma$ denotes the ground constrained clause  $\Lambda\sigma \parallel C\sigma$.

In addition, we assume a well-founded, total, strict ordering $\prec$ on ground literals, called an $\hspec$-order, such that
 background literals are smaller than foreground literals.
This ordering is then lifted to constrained clauses and sets thereof by its respective multiset extension.
We overload $\prec$ for literals, constrained clauses, and sets of constrained clause
 if the meaning is clear from the context.
We define $\preceq$ as the reflexive closure of $\prec$ and
$N^{\preceq \Lambda\parallel C} := \{D \mid D\in N \;\text{and}\; D\preceq \Lambda\parallel C\}$.
An instance of an LPO with according precedence can serve as $\prec$.

\begin{definition}[Clause Redundancy] \label{prelim:def:redundancy}
  A ground constrained clause $\Lambda\parallel C$ is \emph{redundant} with respect to a set $N$
  of ground constrained clauses and an order $\prec$ if
  $N^{\preceq \Lambda\parallel C} \models_{\mathcal H} \Lambda\parallel C$.
  A clause $\Lambda\parallel C$ is \emph{redundant}  with respect to a clause set
  $N$, an $\hspec$-order $\prec$, and a set of constants $B$ 
  if for all $\Lambda'\parallel  C' \in \mGnd_B(\Lambda\parallel  C)$  the clause $\Lambda'\parallel  C'$ is
  redundant with respect to $\cup_{D\in N} \mGnd_B(D)$.
\end{definition}

\begin{example}[$\BS(\LRA)$]
The running example in this paper is the Bernays-Schoen\-finkel clause fragment over linear arithmetic: $\BS(\LRA)$.
The background theory is linear rational arithmetic
over the many-sorted signature $\sig^{\LRA} = (\sorts^{\LRA}, \opers^{\LRA}, \preds^{\LRA})$ with
$\sorts^{\LRA} = \{\LRA\}$, $\opers^{\LRA} = \{0,1,+,-\}\cup\Rat$, $\preds^{\LRA} = \{\leq, <, \not =, =, >, \geq\})$
where $\LRA$ is the linear arithmetic sort, the function symbols consist of $0,1,+,-$ plus the rational numbers and
predicate symbols $\le, <, =, \not =, >, \ge$. The linear arithmetic theory $\theory^{\LRA} = (\sig^{\LRA},\{\sigval^{\LRA}\})$
consists of  the linear arithmetic signature together with the standard model $\sigval^{\LRA}$ of linear arithmetic.
This theory is then extended by the free (foreground) first-order signature
$\sig^{\BS} = (\{\LRA\},\opers^{\BS},\preds^{\BS})$ where $\opers^{\BS}$ is a set of constants of sort $\LRA$ different from $\opers^{\LRA}$ constants,
and $\preds^{\BS}$ is a set of first-order predicates over the sort $\LRA$. We are interested in hierarchic algebras $\sigval^{\BS(\LRA)}$ over
the signature $\sig^{\BS(\LRA)} = (\{\LRA\},\opers^{\BS}\cup \opers^{\LRA},\preds^{\BS}\cup \preds^{\LRA})$ that are $\sig^{\BS(\LRA)}$ algebras
such that $\sigval^{\BS(\LRA)}|_{\sig^{\LRA}} = \sigval^{\LRA}$.
\end{example}

Note that our definition of the $\BS(\LRA)$ fragment restricted to the linear arithmetic sort does not restrict
expressiveness compared to a definition adding further free sorts to $\sig^{\BS}$. Free sorts containing only constants
can be simulated by the linear arithmetic
sort in a many-sorted setting. For example, assume a free sort $S$ with constants $k_1,\ldots,k_n$, then this sort can be represented
by the $\LRA$ sort with domain constants $1,\ldots, n$ where each occurrence of a $k_i$ in a clause set is replaced by $i$ and each occurrence of a variable
$u_S$ with a fresh variable $x$ of sort $\LRA$ plus the additional constraint $P(x)$. The fresh predicate $P$ is true
exactly for the numbers $1,\ldots,n$. This is encoded by the
first-order $\sig^{\BS(\LRA)}$ formula $\forall x [P(x) \leftrightarrow (x= 1\lor\ldots\lor x= n)]$.
For example, a clause $y>1 \parallel  R(y,k_5,u_P)$ is encoded by the clause $y>1\parallel  \neg P(x) \lor R(y,5,x)$ preserving satisfiability.
Please note that this encoding does not introduce any free constants of the $\LRA$ sort. The restriction to background sorts
simplifies the presentation of the $SCL(T)$ calculus in Section~\ref{sec:sclrulesprop} significantly.

We call a clause set $N$ \emph{abstracted} if the arguments of any predicate from $\preds^{\fspec}$ are
only variables. Abstraction can always be obtained by adding background constraints, e.g., the $\BS(\LRA)$
clause $x >1 \parallel  R(x, 5)$ can be abstracted to $x >1, y = 5 \parallel  R(x, y)$, preserving satisfiability.
Recall that even in the foreground signature we only consider background sorts and that the only
operators in the foreground signature are constants.

A set $N$ of $\hspec$ clauses is called \emph{pure} if it does not contain symbols from $\opers^{\fspec}$
ranging into a sort of $\sorts^\bspec$. In this case $N$
is \emph{sufficiently complete} according to~\cite{BachmairGanzingerEtAl94}, hence hierarchic superposition
is complete for $N$~\cite{BachmairGanzingerEtAl94,BaumgartnerWaldmann19}.
As a consequence, a pure clause set $N$ is unsatisfiable iff $\mGnd_B(N)$ is unsatisfiable for a sufficiently
large set $B$ of constants. We will make use of this result in the completeness proof for our calculus, Theorem~\ref{theo:groununsatcomple}.
A set $N$ of $\hspec$ clauses is called \emph{almost pure}, if the only symbols it contains
from $\opers^{\fspec}$ ranging into a sort of $\sorts^\bspec$ are constants. In general, there cannot be a complete
calculus for almost pure clause sets, see Example~\ref{exam:noncompalpure}. Satisfiability of pure clause sets
is undecidable. We already mentioned in the introduction that this can be shown through a reduction to the
halting problem for two-counter machines~\cite{Minsky67,HorbachEtAl17ARXIV}. Clause redundancy for pure
clause sets cannot be decided as well.

\begin{lemma}[Non-Redundancy for Pure Clause Sets is Undecidable]
For a pure clause set $N$ it is undecidable whether some clause $C$ is non-redundant with respect to $N$.
\end{lemma}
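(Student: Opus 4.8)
The plan is to reduce from the undecidability of satisfiability for pure clause sets, which is available to us from the discussion above. Given a pure clause set $N$, I would leave $N$ itself untouched, add a fresh unary predicate $R$ to the foreground signature, pick an $\hspec$-order $\prec$ — say an instance of an LPO — whose precedence places $R$ above every function and predicate symbol occurring in $N$ (while still keeping all background literals below all foreground literals, as an $\hspec$-order requires), and set $C := \top \parallel R(x)$. Since $R$ contributes no foreground function symbol, the resulting instance still concerns a pure clause set, and the map $N \mapsto (N, C, \prec)$ is computable; the aim is to show that $C$ is non-redundant with respect to $N$ iff $N$ is satisfiable.

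The first technical step makes the ordering restriction in the redundancy definition vacuous for this $C$. Because $R$ is $\prec$-maximal and unary, every ground atom $R(t)$ is $\succ$ every ground literal not mentioning $R$; lifting to the multiset extension, every ground instance $R(t)$ of $C$ is strictly $\succ$ every ground instance of every clause of $N$. Hence for any constant set $B$ and any ground instance $R(t)$ of $C$ over $B$ we have $\bigl(\bigcup_{D\in N}\mGnd_B(D)\bigr)^{\preceq R(t)} = \bigcup_{D\in N}\mGnd_B(D)$, so that, by Definition~\ref{prelim:def:redundancy}, $C$ is redundant with respect to $N$ and $B$ iff $\bigcup_{D\in N}\mGnd_B(D) \models_{\hspec} R(t)$ for every ground instance $R(t)$ of $C$ over $B$.

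The second step reduces this entailment to unsatisfiability. Since $R$ does not occur in $N$, any hierarchic model of $\mGnd_B(N)$ can be re-interpreted to make $R$ the empty relation without affecting satisfaction of $\mGnd_B(N)$, and this modified model falsifies all $R(t)$ simultaneously; therefore $\bigcup_{D\in N}\mGnd_B(D) \models_{\hspec} R(t)$ for all such $R(t)$ iff $\mGnd_B(N)$ is unsatisfiable. Combining the two steps with the fact recalled above — a pure clause set $N$ is unsatisfiable iff $\mGnd_B(N)$ is unsatisfiable for all sufficiently large $B$, equivalently, by compactness of the background theory, iff the full ground instantiation $\mGnd(N)$ is unsatisfiable — we conclude that $C$ is redundant with respect to $N$ iff $N$ is unsatisfiable, i.e.\ $C$ is non-redundant with respect to $N$ iff $N$ is satisfiable. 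Undecidability of the latter then transfers to non-redundancy.

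The main obstacle, and the point requiring care, is the treatment of the constant set $B$. Definition~\ref{prelim:def:redundancy} parametrizes redundancy by a fixed $B$, and for a fixed finite $B$ the question is in fact decidable (the relevant ground clause set is finite and ground entailment modulo the background theory is decidable); so the statement has to be read with $B$ ranging over all sufficiently large finite sets of constants — equivalently, as redundancy with respect to $\mGnd(N)$ — which is exactly why the reduction outputs an instance $(N,C,\prec)$ not mentioning $B$. One then has to check that both the $\succ$-maximality argument and the empty-$R$ model surgery are uniform in $B$ and in the chosen ground instance $R(t)$, and that ``$\mGnd_B(N)$ unsatisfiable for all sufficiently large $B$'' really coincides with ``$N$ unsatisfiable''. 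A minor secondary issue is confirming that an $\hspec$-order with $R$ maximal exists at all, which the closing ``instance of an LPO'' remark of the preliminaries supplies.
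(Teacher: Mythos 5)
Your proof is correct, and it follows the same underlying strategy as the paper's: reduce from (un)satisfiability of pure clause sets, choose the $\hspec$-order so that the query clause is strictly maximal, whence the restriction to $N^{\preceq\,\cdot}$ becomes vacuous, redundancy collapses to entailment, and entailment of fresh maximal atoms collapses to unsatisfiability. The difference is only in the gadget. The paper, following the construction of Weidenbach (2015), modifies the clause set to $N'=\{P\lor C_1,\,Q\lor C_2,\,C_3,\ldots,C_n,\,\neg P,\neg Q\}$ with two fresh propositional symbols $P,Q$ assumed $\prec$-maximal and asks whether the ground clause $P\lor Q$ is non-redundant with respect to $N'\setminus\{\neg P,\neg Q\}$; there, redundancy of $P\lor Q$ amounts to $N'\setminus\{\neg P,\neg Q\}\models_\hspec P\lor Q$, i.e.\ unsatisfiability of $N'$, i.e.\ of $N$. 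You instead leave $N$ untouched, take a single fresh unary predicate $R$ made maximal, and query $R(x)$; redundancy then amounts to $\mGnd_B(N)\models_\hspec R(t)$ for all instances, which by reinterpreting $R$ as empty is exactly unsatisfiability of $\mGnd_B(N)$. Your gadget is slightly leaner (no modification of $N$, one fresh symbol), while the paper's has the feature that the query clause is built only from symbols already present in the modified set, matching the clause-learning setting of the cited construction. Both proofs share the same implicit reading of Definition~\ref{prelim:def:redundancy} that you make explicit, namely that $B$ ranges over sufficiently large constant sets (equivalently, the full grounding), and both choose the $\hspec$-order as part of the reduction, as the paper's proof does with ``$P,Q$ maximal''. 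One small caveat on a side remark of yours: decidability of the fixed-$B$ case needs decidability of ground $\theory^{\bspec}$-entailment, which the paper only assumes from Section~\ref{sec:sclrulesprop} on; this does not affect the correctness of your reduction.
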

\begin{proof}
  The construction is similar to the construction taken in \cite{Weidenbach15}.
  Let $N=\{C_1,\ldots,C_n\}$ be an arbitrary pure clause set. Then $N$ is satisfiable iff $N' = \{P\lor C_1, Q\lor C_2, C_3, \ldots, C_n, \neg P, \neg Q\}$ is satisfiable
  for two fresh propositional variables (predicates of arity zero)  $P, Q$. Then $N'$ is satisfiable iff $P\lor Q$ is non-redundant with respect to $N'\setminus\{\neg P, \neg Q\}$ where we
  assume that $P,Q$ are maximal in the atom ordering $\prec$.
\end{proof}

\begin{example}[Non-Compactness of Almost Pure $\BS(\LRA)$ Clause Sets] \label{exam:noncompalpure}
Note that for $\BS(\LRA)$ clauses sets that are not pure, i.e., they contain $\sig^{\BS}$ constants,
there cannot be a complete calculus, in general, because compactness~\cite{BachmairGanzingerEtAl94} is lost. For example, consider
the five abstracted clauses\newline
\centerline{$\begin{array}{c}
\pnat(0),\; y = x + 1\parallel  \neg \pnat(x) \lor \pnat(y), \\
x < 0 \parallel  \neg \pnat(x),\;  0 < x < 1 \parallel  \neg \pnat(x),\; x > 0, y = x + 1 \parallel  \neg \pnat(y) \lor \pnat(x)
            \end{array}$}
          
              \noindent
defining the natural numbers with respect to $\theory^{\LRA}$ and a foreground predicate $\pnat$. The four clauses\newline
\centerline{$x = a \parallel  \pnat(x),\; P(0),\; y = x+ 1 \parallel  \neg P(x) \lor P(y),\; x = a \parallel  \neg P(x)$}

\noindent
express that $a$ is a natural number but not contained in $P$ that itself contains all natural numbers.
The clause set is unsatisfiable, but there is no proof by hierarchic superposition (resolution), because
any finite set of ground clauses out of this clause set is satisfiable. If the semantics of $\BS(\LRA)$ clause
sets is not restricted to hierarchic algebras, there is a model for the clause set by adding some junk element to $\Rat$
and assigning it to $a$.
\end{example}


\section{$\SCLT$} \label{sec:sclrulesprop}

\paragraph{\bfseries Assumptions:}
For this section we consider only pure, abstracted clause sets $N$. We assume that
the background theory $\theory^{\bspec}$ is term-generated, compact, contains
an equality $=$, and that all constants of the background signature are domain constants.
We further assume that the set $\opers^{\fspec}$ contains infinitely
many constants for each background sort.

\begin{example}[Pure Clauses]
  With respect to $\BS(\LRA)$ the unit clause $x \geq 5, 3x + 4y = z \parallel Q(x,y,z)$ is abstracted and pure
  while the clause $x \geq 5, 3x + 4y = a, z = a \parallel Q(x,y,z)$ is abstracted but not pure because of the foreground constant $a$
  of the $\LRA$ sort,
  and the clause $x \geq 5, 3x + 4y = 7 \parallel Q(x,y,7)$ is not abstracted.
\end{example}

\bigskip
Note that for pure, abstracted clause sets, any unifier between two foreground
literals is simple and its codomain consists of variables only.

In order for the $\SCLT$ calculus to be effective, decidability in
$\theory^{\bspec}$ is needed as well. For the calculus we implicitly use
the following equivalence: A $\sig^{\bspec}$ sentence
$\exists x_1,\ldots,x_n \phi$ where $\phi$ is quantifier free is
true, i.e., $\models_{\bspec} \exists x_1,\ldots,x_n \phi$ iff the ground
formula $\phi\{x_1\mapsto a_1,\ldots, x_n\mapsto a_n\}$ where the $a_i$
are $\opers^{\fspec}$ constants of the respective background sorts
is $\hspec$ satisfiable. Together with decidability in $\theory^{\bspec}$
this guarantees decidability of the satisfiability of ground constraints from
constrained clauses.

If not stated otherwise, satisfiability means satisfiability with respect to $\hspec$.
The function $\adiff(B)$ for some finite sequence of background sort constants denotes
a constraint that implies different interpretations for the constants in $B$. In
case the background theory enables a strict ordering $<$ as $\LRA$ does,
then the ordering can be used for this purpose.
For example, $\adiff([a,b,c])$ is then the constraint $a < b < c$.
In case the background theory does not enable a strict ordering, then
inequations can express disjointness of the constants.
For example, $\adiff([a,b,c])$ is then constraint $a \neq b \land a \neq c \land b\neq c$.
An ordering constraint has the advantage over an inequality constraint that
it also breaks symmetries. Assuming all constants to
be different will eventually enable a satisfiability test for foreground literals
based on purely syntactic complementarity.

The inference rules of $\SCLT$ are represented by an abstract rewrite system.
They operate on a problem state, a six-tuple $\Gamma = (M; N; U; B; k; D)$
where $M$ is a sequence of annotated ground literals,
the \emph{trail}; $N$ and $U$ are the sets of \emph{initial} and
\emph{learned} constrained clauses; $B$ is a finite sequence of constants of background sorts
for instantiation; $k$ counts the number
of decisions in $M$; and $D$ is a constrained closure that is either $\top$, $\Lambda\parallel\bot\cdot\sigma$,
or $\Lambda\parallel C\cdot\sigma$. Foreground literals in $M$ are either
annotated with a number, a level; \ie, they have the form $L^k$ meaning that
$L$ is the $k$-th guessed decision literal, or they are annotated with a
constrained closure that propagated the literal to become true, \ie, they have
the form $(L\sigma)^{(\Lambda\parallel C\lor L)\cdot\sigma}$. An annotated literal is called a decision literal if it is of the
form $L^k$ and a propagation literal or a propagated literal if it of in the form
$L\cdot\sigma^{(\Lambda\parallel C\lor L)\cdot\sigma}$.
A ground foreground literal $L$ is of \emph{level} $i$ with respect to a problem state $(M; N; U; B; k; D)$ if $L$ or
$\comp(L)$ occurs in $M$ and the first decision literal left from $L$
($\comp(L)$) in $M$, including $L$, is annotated with $i$. If there is no such
decision literal then its level is zero. A ground constrained clause $\Lambda\parallel C$
is of \emph{level} $i$ with respect to a problem state
$(M; N; U; B; k; D)$ if $i$ is the maximal level of a foreground literal in $C$; the level
of an empty clause $\Lambda\parallel \bot\cdot\sigma$ is 0.
A ground literal $L$ is \emph{undefined} in $M$
if neither $L$ nor $\comp(L)$ occur in $M$. The initial state
for a first-order, pure, abstracted $\hspec$ clause set $N$ is $(\epsilon; N; \emptyset;B; 0; \top)$,
where $B$ is a finite sequence of foreground constants of background sorts. These constants cannot occur in $N$, because $N$ is pure.
The final state $(\epsilon; N; U; B; 0; \Lambda\parallel \bot)$ denotes unsatisfiability of $N$.
Given a trail $M$ and its foreground literals $\forgd(M) = \{L_1,\ldots,L_n\}$
an  $\hspec$ ordering $\prec$ \emph{induced} by $M$ is any
$\hspec$ ordering where $L_i \prec L_j$ if $L_i$ occurs left from $L_j$ in $M$,
or, $L_i$ is defined in $M$ and $L_j$ is not.

The transition rules for $\SCLT$ are

\bigskip
\shortrules{Propagate}
{$(M;N;U;B;k;\top)$}
{$(M,L\sigma^{(\Lambda\parallel  C_0\lor L)\delta\cdot\sigma},\Lambda'\sigma;N;U;B;k;\top)$}
{provided $\Lambda\parallel  C\in (N\cup U)$,
  $\sigma$ is grounding for $\Lambda\parallel  C$,
  $\adiff(B)\land\backgd(M)\land\Lambda\sigma$
  is satisfiable,
  $C = C_0\lor C_1 \lor L$,
  $C_1\sigma = L\sigma\lor\ldots\lor L\sigma$,
  $C_0\sigma$ does not contain $L\sigma$,
  $\delta$ is the mgu of the literals in $C_1$ and $L$,
  $\Lambda'\sigma$ are the background literals from $\Lambda\sigma$ that are not yet on the trail,
  $\forgd(M)\models\lnot(C_0\sigma)$, $\cdom(\sigma)\subseteq B$, and $L\sigma$ is
  undefined in $M$
}{\SCLT}{12}

\bigskip
The rule Propagate applies exhaustive factoring to the propagated literal with
respect to the grounding substitution $\sigma$ and annotates the factored clause
to the propagation. By writing $M,L\sigma^{(\Lambda\parallel  C_0\lor L)\delta\cdot\sigma},\Lambda'\sigma$
we denote that all background literals from $\Lambda'\sigma$ are added to the trail.

\bigskip
\shortrules{Decide}
{$(M;N;U;B;k;\top)$}
{$(M,L\sigma^{k+1},\Lambda\sigma;N;U;B;k+1;\top)$}
{provided $L\sigma$ is undefined in $M$,
  $|L\sigma| \in \atoms(\mGnd_B(N\cup U))$,
  $|K\sigma| \in \atoms(\mGnd_B(N\cup U))$ for all $K\sigma\in\Lambda\sigma$,
  $\sigma$ is grounding for $\Lambda$,
  all background literals in $\Lambda\sigma$ are undefined in $M$,
  $\adiff(B)\land\backgd(M)\land\Lambda\sigma$ is satisfiable,
  and $\cdom(\sigma)\subseteq B$
}{\SCLT}{12}


\bigskip
Making sure that no duplicates of background literals occur on the trail by rules Propagate and Decide together
with a fixed finite sequence $B$ of constants and the restriction of Propagate and Decide to undefined literals
guarantees that the number of potential trails of a run is finite. Requiring the constants from $B$ to
be different by the $\adiff(B)$ constraint enables a purely syntactic consistency check for foreground literals.

\bigskip
\shortrules{Conflict}
{$(M;N;U;B;k;\top)$}
{$(M;N;U;B;k;\Lambda\parallel  D\cdot\sigma)$}
{provided $\Lambda\parallel D\in (N\cup U)$,
  $\sigma$ is grounding for $\Lambda\parallel D$,
  $\adiff(B)\land\backgd(M)\land\Lambda\sigma$ is satisfiable,
  $\forgd(M)\models\lnot(D\sigma)$, and $\cdom(\sigma)\subseteq B$
}{\SCLT}{12}

\bigskip
\shortrules{Resolve}
{$(M, L\rho^{\Lambda\parallel  C\lor L\cdot\rho};N;U;B;k;
  (\Lambda'\parallel  D\lor L')\cdot\sigma)$}
{$(M, L\rho^{\Lambda\parallel  C\lor L\cdot\rho};N;U;B;k;(\Lambda\land\Lambda'\parallel  D\lor C)\eta\cdot\sigma\rho)$}
{provided $L\rho = \comp(L'\sigma)$, and $\eta=\mMGU(L,\comp(L'))$}
{\SCLT}{11}

\bigskip
Note that Resolve does not remove the literal $L\rho$ from the trail. This is needed if the clause
$D\sigma$ contains further literals complementary of $L\rho$ that have not been factorized.

\bigskip
\shortrules{Factorize}
{$(M;N;U;B;k;(\Lambda\parallel  D\lor L \lor L')\cdot\sigma)$}
{$(M;N;U;B;k;(\Lambda\parallel  D\lor L)\eta\cdot\sigma)$}
{provided $L\sigma = L'\sigma$, and $\eta=\mMGU(L,L')$}
{\SCLT}{11}

\bigskip
Note that Factorize is not limited with respect to the trail. It may apply
to any two literals that become identical by application of the grounding substitution $\sigma$.

\bigskip
\shortrules{Skip}
{$(M,L;N;U;B;k;\Lambda'\parallel D\cdot\sigma)$}
{$(M;N;U;B;l;\Lambda'\parallel D\cdot\sigma)$}
{provided $L$ is a foreground literal
  and $\comp(L)$ does not occur in $D\sigma$,
  or $L$ is a background literal; if $L$ is a foreground decision literal then $l=k-1$, otherwise $l=k$
}{\SCLT}{11}

\bigskip
Note that Skip can also skip decision literals. This is needed because we won't eventually
require exhaustive propagation. While exhaustive propagation in CDCL is limited to the
number of propositional variables, in the context of our logic, for example $\BS(\LRA)$, it is exponential in the
arity of foreground predicate symbols and can lead to an unfair exploration of the space
of possible inferences, harming completeness, see Example~\ref{exa:exhaustiveprop}.

\bigskip
\shortrules{Backtrack}
{$(M,K^{i+1},M';N;U;B;k;(\Lambda\parallel  D\lor L)\cdot\sigma)$}
{$(M,L\sigma^{(\Lambda\parallel  D\lor L)\cdot\sigma},\Lambda'\sigma;N; U\cup\{\Lambda\parallel  D\lor L\};B;i;\top)$}
{provided $L\sigma$ is of level $k$, and $D\sigma$ is of level $i$, $\Lambda'\sigma$ are the background literals from $\Lambda\sigma$ that are not yet on the trail}
{\SCLT}{11}

\bigskip
The definition of Backtrack requires that if $L\sigma$ is the only literal of level $k$ in $(D\lor L)\sigma$ then additional
occurrences of $L\sigma$ in $D$ have to be factorized first before Backtrack can be applied.

\bigskip
\shortrules{Grow}
{$(M;N;U;B;k;\top)$}
{$(\epsilon;N;U;B\cup B';0;\top)$}
{ provided $B'$ is a non-empty sequence of foreground constants of background sorts distinct from
  the constants in $B$}{\SCLT}{11}

\bigskip
In case the $\adiff$ constraint is implemented by a strict ordering predicate on the
basis of the sequence $B$, it can be useful to inject the new constants $B'$ into $B\cup B'$
such that the ordering of the constants from $B$ is not changed. This can help caching background
theory results for testing trail satisfiability.

\begin{definition}
The rules Propagate, Decide, Grow,
and Conflict are called \emph{conflict search} rules and the rules Resolve, Skip,
Factorize, and Backtrack are called \emph{conflict resolution} rules.
\end{definition}

Recall that the goal of our calculus is to replace the ordering restrictions of the hierarchic
superposition calculus with a guiding model assumption. All our inferences are hierarchic superposition
inferences where the ordering restrictions are neglected.

\begin{example}[Inconsistent Trail]
  Consider a clause set $N=\{ R(x,y), x \le y\parallel  \lnot R(x,y)\lor P(x), x \ge y\parallel  \lnot R(x,y)\lor \lnot P(y)\}$;
  if we were to remove the $\adiff(B)$ constraint from the side conditions of rule Propagate
  we would be able to obtain inconsistent trails.
  Starting with just $B=\{a,b\}$ as constants it is
  possible to propagate three times and obtain the
  trail $M = [R(a,b),P(a), a\le b, \lnot P(b), a \ge b]$,
  $M$ is clearly inconsistent as $M\models P(a)$, $M\models \lnot P(b)$ yet $a=b$.
\end{example}

\begin{example}[Exhaustive Propagation] \label{exa:exhaustiveprop}
  Consider a $\BS(\LRA)$ clause set $N = \{x = 0 \parallel  \pnat(x),\; y = x + 1\parallel  \neg \pnat(x) \lor \pnat(y)\} \cup N'$
  where $N'$ is unsatisfiable and nothing can be propagated from $N'$.
  Let us further assume that $N'$ is satisfiable with respect to any instantiation
  of variables with natural numbers.
  If propagation is not restricted, then the first two clauses will consume
  all constants in $B$. For example, if $B = [a,b,c]$ then the trail
  $[\pnat(a), a=0, \pnat(b), b=a+1, \pnat(c), c=b+1]$ will be derived.
  Now all constants are fixed to natural numbers. So there cannot
  be a refutation of $N'$ anymore. An application of Grow will not
  solve the issue, because again the first two rules will fix all
  constants to natural numbers via exhaustive propagation.
\end{example}

\begin{definition}[Well-formed States]
  A state $(M;N;U;B;k;D)$ is \emph{well-formed}
  if the following conditions hold:
  \begin{enumerate}
  \item\label{wf-1}
    all constants appearing in $(M;N;U;B;k;D)$
    are from $B$ or occur in $N$.
  \item\label{wf-2}
    $M\land\adiff(B)$ is satisfiable
  \item\label{wf-3}
    $N\models_{\hspec} U$,
  \item Propagating clauses remain propagating and conflict clauses remain false: \label{wf-4}
    \begin{enumerate}
    \item \label{wf-4a}
      if $D = \Lambda\parallel C\cdot\sigma$ then $C\sigma$ is false in $\forgd(M)$
      and $\backgd(M)\land\adiff(B)\land\Lambda\sigma$ is satisfiable,
    \item \label{wf-4b}
      if $M = M_1,L\sigma^{(\Lambda\parallel C\lor L)\cdot\sigma},M_2$
      then $C\sigma$ is false in $\forgd(M_1)$,
      $L\sigma$ is undefined in $M_1$, and
      $\backgd(M_1)\land\adiff(B)\land\Lambda\sigma$ is satisfiable.
    \end{enumerate}
  \item All clauses in $N\cup U$ are pure. In particular, they don't contain any constants from $B$. \label{wf-5}
  \end{enumerate}
\end{definition}

\begin{lemma}[Rules preserve Well-Formed States] \label{lem:sclt-pres-wfs}
  The rules of $\SCLT$ preserve well-formed states.
\end{lemma}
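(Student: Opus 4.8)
The proof goes by case analysis over the nine transition rules of $\SCLT$; for each rule we assume the source state $\Gamma$ is well-formed and verify conditions \ref{wf-1}--\ref{wf-5} for the target state $\Gamma'$. Before the case split I would observe a few global invariants that make most cases routine: since $N$ is fixed and $U$ only ever grows by adding clauses of the form $\Lambda\parallel D\lor L$ that arose by Resolve/Factorize from clauses already entailed by $N$, condition \ref{wf-3} ($N\models_\hspec U$) only needs a soundness remark for Backtrack, and \ref{wf-5} (purity, no $B$-constants in $N\cup U$) follows because the unifiers involved in pure abstracted clause sets are simple with variable-only codomain (as noted right after the Pure Clauses example), so no $B$-constant can be introduced into a learned clause. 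Condition \ref{wf-1} is immediate for every rule except Grow, where $B$ is replaced by $B\cup B'$ and the claim holds by definition of the rule.

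\textbf{Conflict search rules.} For Propagate and Decide the only substantive points are \ref{wf-2} and \ref{wf-4b}. For \ref{wf-2}: the rule's side condition states $\adiff(B)\land\backgd(M)\land\Lambda\sigma$ is satisfiable, and $\Lambda'\sigma\subseteq\Lambda\sigma$ are exactly the background literals added to the trail, so $M,\Lambda'\sigma$ together with $\adiff(B)$ is satisfiable; the newly added foreground literal $L\sigma$ (resp. $L\sigma^{k+1}$) is a foreground literal and hence irrelevant to the $\models_\bspec$ satisfiability of the background part, so \ref{wf-2} is preserved. For \ref{wf-4b} I use the side conditions $\forgd(M)\models\lnot(C_0\sigma)$, $L\sigma$ undefined in $M$, and the satisfiability of $\backgd(M)\land\adiff(B)\land\Lambda\sigma$ directly, noting that after exhaustive factoring the annotated clause is $(\Lambda\parallel C_0\lor L)\delta$ and $C_0\delta\sigma = C_0\sigma$ is still false in $\forgd(M)$. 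For Conflict, condition \ref{wf-4a} is exactly the side condition of the rule, and the other conditions are unchanged. For Grow, $M$ is reset to $\epsilon$, $D$ to $\top$, $k$ to $0$, so \ref{wf-2} holds trivially (only $\adiff(B\cup B')$, which is satisfiable as the constants are fresh and distinct), \ref{wf-4} holds vacuously, and \ref{wf-3}, \ref{wf-5} are untouched.

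\textbf{Conflict resolution rules.} Resolve and Factorize only change $D$; I must check \ref{wf-4a} for the new closure. For Factorize, $(\Lambda\parallel D\lor L)\eta\cdot\sigma$ denotes the same ground clause $\Lambda\sigma\parallel(D\lor L)\sigma$ (since $L\sigma = L'\sigma$ and $\eta=\mMGU(L,L')$ is more general than $\sigma$), so falsity in $\forgd(M)$ and background satisfiability carry over verbatim. For Resolve: the new closure is $(\Lambda\land\Lambda'\parallel D\lor C)\eta\cdot\sigma\rho$; ground-instantiated this is $\Lambda\rho\land\Lambda'\sigma \parallel D\sigma\lor C\rho$. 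Falsity of $D\sigma$ in $\forgd(M)$ comes from \ref{wf-4a} for the source; falsity of $C\rho$ in $\forgd(M,L\rho^{\dots})$ — hence in $\forgd(M)$ once we note that resolving away $L\rho$ removes the literal making $C\lor L$ false at that point — comes from \ref{wf-4b} applied to the trail entry $L\rho^{\Lambda\parallel C\lor L\cdot\rho}$, which says $C\rho$ is false in the prefix of the trail before $L\rho$; and background satisfiability of $\backgd(M)\land\adiff(B)\land\Lambda\rho\land\Lambda'\sigma$ follows by combining the two satisfiability facts from \ref{wf-4a}/\ref{wf-4b} — here one must check they refer to compatible trail prefixes, which they do since $\backgd(M)$ extends both prefixes. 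For Skip, removing the trail literal $L$ (foreground with $\comp(L)\notin D\sigma$, or background) preserves \ref{wf-4a} because $D\sigma$'s truth value in $\forgd$ is unaffected by dropping such an $L$, and \ref{wf-2} is preserved since we only shrink $M$. For Backtrack: \ref{wf-3} needs the soundness remark that $\Lambda\parallel D\lor L$ is $\hspec$-entailed by $N$ (it was produced by sound resolution/factoring steps from $N\cup U$, and $N\models_\hspec U$ inductively); \ref{wf-2} and \ref{wf-4b} for the new propagated entry $L\sigma^{(\Lambda\parallel D\lor L)\cdot\sigma}$ follow from the source's \ref{wf-4a} (which gives $D\sigma$ false and $\Lambda\sigma$ background-satisfiable with the trail) together with the level bookkeeping: $D\sigma$ is of level $i$, so after popping back to level $i$ the prefix $M$ still makes $D\sigma$ false and $L\sigma$ undefined.

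\textbf{Main obstacle.} The delicate point is the bookkeeping in Resolve and Backtrack — making precise, via condition \ref{wf-4b}, which prefix of the trail witnesses the falsity of each resolved literal and the satisfiability of each accumulated constraint, and checking that these prefixes are mutually compatible (nested) so that the conjunction of constraints remains satisfiable against the full $\backgd(M)$. The other subtlety is confirming that exhaustive factoring in Propagate (and the factorization requirement before Backtrack) genuinely leaves the residual clause false in $\forgd(M)$ and the propagated/backtracked literal undefined; this is where the explicit decomposition $C = C_0\lor C_1\lor L$ with $C_1\sigma$ a block of copies of $L\sigma$ and $C_0\sigma$ not containing $L\sigma$ does the work. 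All remaining verifications are routine.
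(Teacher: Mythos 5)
Your proposal is correct and follows essentially the same route as the paper's proof: induction on the derivation with a rule-by-rule case analysis, using the rules' side conditions for conditions \ref{wf-1}, \ref{wf-2} and \ref{wf-4}, soundness of hierarchic resolution and factoring for \ref{wf-3} at Backtrack, the level bookkeeping for \ref{wf-4b} at Backtrack, and purity of unifiers for \ref{wf-5}. The points you flag as delicate (combining the satisfiability facts for the accumulated constraint in Resolve, and the effect of Skip and of factoring on falsity in $\forgd(M)$) are treated at the same level of detail in the paper, so nothing essential is missing relative to it.
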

\begin{proof}
  We prove each of the five properties by induction on the length
  of a derivation starting from the initial state
  $(\epsilon;N;\emptyset;B;k;\top)$. The induction step for
  the first two claims is
    \[(M;N;U;B;k;D)\Rightarrow_\SCLT(M';N';U';B';k';D').\]

  \medskip\noindent
  \ref{wf-1}.~In the initial state
  $(\epsilon;N;\emptyset;B;k;\top)$
  constants
  can only appear in $N$ and $B$,
  so it satisfies the claim.
  For the inductive step we do a case analysis on the rule application
  and prove $\con((M';N';U';B';k';D'))\subseteq\con(N')\cup B'$
  by case analysis on the rules of $\SCLT$.
  If we have applied Propagate or Decide then
  $N'=N$, $U'=U$, $B'=B$, $D'=D=\top$ and $M'=M,L\sigma$.
  So we only need to prove that $\con(M')\subseteq\con(N)\cup B$.
  Both rules require $|L\sigma| \in \atoms(\mGnd_B(N\cup U))$ satisfying the claim.

  In case of the rules Grow, Skip, or Backtrack, then
  $\con(N)\cup B\subseteq\con(N')\cup B'$ and
  $\con(M')\cup\con(U')\cup\con(D')\subseteq
  \con(M)\cup\con(U)\cup\con(D)$.

  If the rule Conflict was used then only the last component of
  the state changed $D'=\Lambda\parallel C\cdot\sigma$ with
  $\cdom(\sigma) \subseteq B$ and $\con(\Lambda\parallel C)\subseteq \con(N)\cup B$ by induction hypothesis.

  If one of the rules Resolve or Factorize were used
  then as for Conflict the only component of the state that
  changed was the conflict clause and the constants in $D'$
  are a subset of the constants in $M$ and $D$.

  \medskip\noindent
  \ref{wf-2}.~In the initial state $(\epsilon;N;\emptyset;B;k;\top)$
  the condition $\adiff(B)$ is satisfied as
  we assume constants in $B$ to be distinct.
  For the inductive step we do a case analysis on the rule application.
  If the rule used was one of Conflict, Backtrack, Skip, Resolve,
  or Factorize, then $M = M',M''$ with $M''$ possibly empty and
  $B=B'$, so that $M\land \adiff(B)$ satisfiable implies
  $M'\land \adiff(B')$ to be satisfiable.
  If the rule Grow was used then we have $M'=\epsilon$ and that all
  constants in $B'=B\oplus B''$ are distinct,
  so that satisfiability of $\adiff(B')$ is immediate.
  If the rule used was Propagate or Decide then we have
  $M'=M,L\sigma,\Lambda\sigma$ and $B' = B$, from the
  preconditions on the rules we also know that $L\sigma$ is undefined
  in $M$ and that $\backgd(M')\land\adiff(B')$ is satisfiable.

  \medskip\noindent
  \ref{wf-3}.~By induction on the number of learned clauses.
  We prove that for each application of Backtrack
  \[\begin{array}{l}
      (M,K^{i+1},M';N;U;B;k;D\lor L\cdot\sigma)\\
    \qquad \Rightarrow^{\text{Backtrack}}_\SCLT
      (M,L\sigma^{(\Lambda\parallel  D\lor L)\cdot\sigma},\Lambda'\sigma;N; U\cup\{\Lambda\parallel  D\lor L\};B;i;\top)
      \end{array}
  \]
  we have $N\cup U \models_\hspec D\lor L$. Following conflict
  resolution backward we can find a sequence of constrained closures
  $C_1\cdot\sigma_1,\dots,C_n\cdot\sigma_n$, where $C_n\cdot\sigma_n = D\lor L\cdot\sigma$ such that $C_1 \in (N\cup U)$ is the most recent conflict clause, and $C_{j+1}$ is either the result of a 
  factorization on $C_j$ or the result of a resolution inference between $C_j$ and a clause in $(N\cup U)$.
  By induction on the length of conflict resolution and soundness of resolution and factoring we get $N\cup U \models_\hspec D\lor L$.

  \medskip\noindent
  \ref{wf-4}.~For the initial state the properties \ref{wf-4a} and \ref{wf-4b}
  obviously hold. For the induction step and an application of the rules Decide, Skip, and Grow there
  is nothing to show.

  Consider a state $(M;N;U;B;k;\Delta\parallel D\cdot\delta)$
  obtained by an application of Conflict.
  By the side conditions of Conflict
  $\adiff(B)\land\backgd(M)\land\Delta\delta$ is satisfiable
  and $\forgd(M)\models\lnot(D\delta)$ is shown for \ref{wf-4a}.
  There is nothing to show for \ref{wf-4b}.
  
  Consider an application of rule Resolve
  \[
    \begin{array}{l}
    (M,L\rho^{\Lambda\parallel (C\lor L)\cdot\rho};N;U;B;k; \Lambda'\parallel (D\lor L')\cdot\sigma)\\
    \qquad \Rightarrow_\SCLT^{\text{Resolve}} (M,L\rho^{\Lambda\parallel (C\lor L)\cdot\rho};N;U;B;k;
      \Lambda\land\Lambda'\parallel (D\lor C)\eta\cdot\rho\sigma)\\
    \end{array}
  \]
  by induction hypothesis $\backgd(M)\land\adiff(B)\land\Lambda\eta\rho\sigma$ is satisfiable and $C\eta\rho\sigma$ is false in $\forgd(M)$,
  because $\Lambda\eta\rho\sigma = \Lambda\rho$ and $C\eta\rho\sigma = C\rho$ because $\eta$ is the mgu and we always assume clauses to be variable
  disjoint. Using the same argument $\backgd(M,L\sigma)\land\adiff(B)\land\Lambda'\eta\delta\sigma$ is satisfiable and $(D\lor L')\eta\delta\sigma)$ is false in $\forgd(M,L\sigma)$.
  Therefore $(D\lor C)\eta\cdot\rho\sigma$ is false in $\forgd(M,L\sigma)$ and $\backgd(M,L\sigma)\land\adiff(B)\land(\Lambda'\land \Lambda)\eta\delta\sigma$ is satisfiable, proving \ref{wf-4a}.
  There is nothing to show for \ref{wf-4b}.

  For an application of the rule Factorize there is nothing to show for \ref{wf-4b} and \ref{wf-4a} obviously holds because the set of
  different ground literals in $(D\lor L \lor L')\sigma$ and $(D\lor L)\sigma$ is identical.

  For an application of the rule Propagate there is nothing to show for \ref{wf-4a}. For \ref{wf-4b} consider the step
  \[
    \begin{array}{l}
      (M;N;U;B;k;\top)\\
      \qquad \Rightarrow_\SCLT^{\text{Propagate}} (M,L\sigma^{(\Lambda\parallel  C_0\lor L)\delta\cdot\sigma},\Lambda'\sigma;N;U;B;k;\top)
      \end{array}
    \]
  where the side conditions of the rule imply the claim modulo the removal of duplicate literals $L\sigma$.

      Finally, when applying Backtrack
      \[
    \begin{array}{l}
      (M,K^{i+1},M';N;U;B;k;(\Lambda\parallel  D\lor L)\cdot\sigma)\\
      \qquad \Rightarrow_\SCLT^{\text{Backtrack}} (M,L\sigma^{(\Lambda\parallel  D\lor L)\cdot\sigma},\Lambda'\sigma;N; U\cup\{\Lambda\parallel  D\lor L\};B;i;\top)
      \end{array}
      \]
      there is nothing to show for \ref{wf-4a}. For \ref{wf-4b} we know by induction hypothesis that $(D\lor L)\sigma$ is false in $\forgd(M,K^{i+1},M')$. The
      literal $L\sigma$ is of level $k$ and $D\sigma$ of level $i$, $k>i$, hence
      $D\sigma$ is false in $\forgd(M)$ and $L\sigma$ undefined in $\forgd(M)$. Furthermore, $\backgd(M,K^{i+1},M')\land\adiff(B)\land\Lambda\sigma$ is satisfiable
      by induction hypothesis, so $\backgd(M)\land\adiff(B)\land\Lambda\sigma$ is satisfiable as well.

      \medskip\noindent
      \ref{wf-4}.~For the initial state all clauses are pure by assumption. Conflict picks a clause from $N\cup U$ that is pure
      by induction hypothesis. Resolve and Factorize only apply unifiers between pure literals to the resulting clause, hence
      also only produce pure clauses from pure clauses. Finally, Backtrack adds the pure learned clause to $N\cup U$.
\end{proof}

\begin{definition}[Stuck State] \label{def:stuck-state}
  A state $(M;N;U;B;k;D)$ is called \emph{stuck} if $D\neq \Lambda\parallel \bot\cdot\sigma$ and none of the rules
  Propagate, Decide, Conflict, Resolve, Factorize,
  Skip, or Backtrack is applicable.
\end{definition}

\begin{proposition}[Form of Stuck States]
  If a run (without rule Grow) ends in a stuck state $(M;N;U;B;k;D)$ where Conflict was applied eagerly, then $D=\top$ and all ground foreground
  literals that can be build from the foreground literals in $N$ by instantiation
  with constants from $B$ are defined in $M$.
\end{proposition}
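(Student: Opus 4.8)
The plan is to argue by contrapositive on each of the two conclusions. Suppose we are in a stuck state $(M;N;U;B;k;D)$ reached by a run without Grow in which Conflict is applied eagerly. First I would show $D = \top$. The component $D$ can only be of the form $\top$, $\Lambda\parallel\bot\cdot\sigma$, or $\Lambda\parallel C\cdot\sigma$ with $C$ nonempty. The case $\Lambda\parallel\bot\cdot\sigma$ is excluded by the definition of stuck state. So suppose toward a contradiction that $D = \Lambda\parallel C\cdot\sigma$ with $C$ nonempty. By well-formedness (Lemma~\ref{lem:sclt-pres-wfs}, condition~\ref{wf-4a}), $C\sigma$ is false in $\forgd(M)$ and $\backgd(M)\land\adiff(B)\land\Lambda\sigma$ is satisfiable. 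Pick a maximal-level literal $L'$ of $C$, so $C\sigma = (D_0\lor L')\sigma$. Since $C\sigma$ is false in $\forgd(M)$, the literal $\comp(L'\sigma)$ occurs in $M$; trace back in $M$ to the first such occurrence. If that occurrence is a propagated literal $L\rho^{\Lambda''\parallel(C''\lor L)\cdot\rho}$ with $L\rho = \comp(L'\sigma)$, then Resolve applies, contradicting stuckness. If it is a decision literal, then either $C\sigma$ contains another literal of the same level that can be factorized — so Factorize applies — or $L'\sigma$ is the unique literal of level $k$ in $C\sigma$ and everything to its right in $M$ satisfies the Skip side condition or is a background literal, so Skip or Backtrack applies. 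In every subcase a conflict-resolution rule is enabled, contradicting stuckness; hence $C$ is empty, i.e. $D\in\{\top,\Lambda\parallel\bot\cdot\sigma\}$, and since the latter is excluded, $D=\top$.

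\medskip\noindent
Next, with $D=\top$ established, I would show every ground foreground literal $L$ over the foreground predicates of $N$ instantiated with constants from $B$ is defined in $M$. Suppose not: let $L$ be such a literal, undefined in $M$, with $|L|\in\atoms(\mGnd_B(N\cup U))$ (it lies in $\atoms(\mGnd_B(N))$ by hypothesis, and $U$ only adds more). I want to show Decide is applicable to $L$ with an appropriate background context $\Lambda\sigma$, which would contradict stuckness. Concretely, $L = P(\bar a)$ for constants $\bar a$ from $B$; since $N$ is abstracted, the corresponding clausal literal is $P(\bar x)$ and the background context arising from Decide can be taken to be an equational constraint fixing $\bar x \mapsto \bar a$ (or empty, depending on how $\sigma$ is chosen on the foreground-literal's own variables). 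The only real side conditions of Decide to check are that $L\sigma$ is undefined in $M$ (assumed), that the literals in $\Lambda\sigma$ lie in $\atoms(\mGnd_B(N\cup U))$ and are undefined in $M$, that $\cdom(\sigma)\subseteq B$, and that $\adiff(B)\land\backgd(M)\land\Lambda\sigma$ is satisfiable. Well-formedness condition~\ref{wf-2} gives satisfiability of $M\land\adiff(B)$, hence of $\backgd(M)\land\adiff(B)$; the equality test using decidability in $\theory^{\bspec}$ together with the stated equivalence on ground constraints lets us extend a satisfying background assignment to $\Lambda\sigma$ consistently (if $\bar a$ are already interpreted distinctly by $\adiff(B)$, the assignment $\bar x\mapsto\bar a$ is consistent by construction). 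Any newly-introduced background literals among $\Lambda\sigma$ can be taken undefined in $M$ by choosing the Decide instance minimally — and if some are already on $M$, Decide still applies after dropping them from $\Lambda\sigma$. Thus Decide is applicable, contradicting the stuck hypothesis; so $L$ must have been defined.

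\medskip\noindent
**The main obstacle** I anticipate is the second half: carefully justifying that Decide can always fire on an undefined foreground ground literal. The subtleties are (i) matching the abstracted clausal literal $P(\bar x)$ against $P(\bar a)$ through a grounding $\sigma$ with $\cdom(\sigma)\subseteq B$, which is exactly what purity and abstraction of $N$ buy us (any unifier between foreground literals is simple with a variable codomain, and here we need a grounding into $B$); and (ii) producing the background context $\Lambda\sigma$ required by the rule's format and checking its satisfiability jointly with $\adiff(B)\land\backgd(M)$. The clean way to handle (ii) is to observe that Decide permits $\Lambda$ to be empty — i.e. to decide the bare literal $P(\bar a)$ with no additional constraint beyond what $\sigma$ already records — and then the satisfiability condition reduces to satisfiability of $\adiff(B)\land\backgd(M)$, which is immediate from well-formedness~\ref{wf-2}. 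The remaining care is purely bookkeeping: ensuring $|L\sigma|$ and the atoms of $\Lambda\sigma$ genuinely lie in $\atoms(\mGnd_B(N\cup U))$, which follows since $L$ ranges over foreground literals built from $N$'s predicates with $B$-constants and $N$ is abstracted. The first half (showing $D=\top$) is routine conflict-analysis casework of the kind already carried out inside the proof of Lemma~\ref{lem:sclt-pres-wfs}, so I expect no difficulty there beyond enumerating the Skip/Factorize/Resolve/Backtrack cases exhaustively.
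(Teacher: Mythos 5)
Your proposal is correct in substance and follows essentially the same route as the paper: first a case analysis showing that no state in conflict resolution (i.e.\ with $D=\Lambda\parallel C\cdot\sigma$ and $C$ nonempty) can be stuck because one of Skip, Resolve, Factorize or Backtrack is applicable, and then the observation that for $D=\top$ any undefined atom of $\mGnd_B(N)$ admits an application of Decide. Your second half is in fact more explicit than the paper's one-sentence treatment: taking $\Lambda$ empty so that the satisfiability side condition collapses to $\adiff(B)\land\backgd(M)$, which holds by well-formedness, is exactly the right way to discharge the rule's premises. One imprecision in your first half: Resolve is only applicable when the complementary propagated literal is the \emph{rightmost} element of the trail, so from ``$\comp(L'\sigma)$ occurs in $M$ as a propagated literal'' you cannot directly conclude ``Resolve applies''; if other literals sit to its right, the rule applicable in the current state is Skip (for background literals, or foreground literals whose complement does not occur in $C\sigma$). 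The paper avoids this by driving the case analysis by the rightmost trail literal rather than by a chosen literal of the conflict clause; reorganizing your cases that way repairs the argument without changing its substance, since contradicting stuckness only requires exhibiting one applicable rule in the current state.
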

\begin{proof}
  First we prove that stuck states never appear during conflict resolution.
  Consider a well-formed state $(M;N;U;B;k;\Delta\parallel D\cdot\delta)$,
  we prove by case analysis that either Skip, Resolve, Factorize or Backtrack can be applied.
  If $M = M', L\sigma$ and $L\sigma$ is either a background literal or a foreground literal such that
  $\comp(L\sigma)$ is not contained in $D\delta$ then Skip can be applied.
  If $M = M', L\sigma^{\Lambda\parallel C\cdot\sigma}$ with $D\delta = D'\lor \comp(L\sigma)$ then
  Resolve can be applied.
  If $M = M',L\sigma^k, M''$ and $D'$ contains multiple occurrences of $\comp(L\sigma)$ then Factorize can be applied.
  In summary, we can reach a state with a unique literal $L\delta$ of level $k$ in $D\delta$.
  Then Backtrack is applicable.
  Finally, if in some state $(M;N;U;B;k;\top)$ where Conflict is not applicable, some atom $|L|\in\atoms(\mGnd_B(N))$ is undefined, we can always apply Decide.
\end{proof}

\begin{lemma}[Stuck States Produce Ground Models] \label{lem:stuck-models}
  If a state $(M;N;U;B;k;\top)$ is stuck then $M\land\adiff(B)\models \mGnd_B(N\cup U)$.
\end{lemma}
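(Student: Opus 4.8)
The plan is to argue by contradiction: assume the stuck state $(M;N;U;B;k;\top)$ does not satisfy $M\land\adiff(B)\models\mGnd_B(N\cup U)$, and exhibit an applicable rule, contradicting stuckness. So suppose there is a ground instance $\Lambda'\parallel C' \in \mGnd_B(\Lambda\parallel C)$ for some $\Lambda\parallel C \in N\cup U$, obtained by a grounding substitution $\sigma$ with $\cdom(\sigma)\subseteq B$, such that $M\land\adiff(B)\not\models \Lambda'\parallel C'$. Since $\Lambda'\parallel C'$ semantically denotes $\neg\Lambda'\lor C'$, this means there is a hierarchic model extending the (syntactically consistent, by well-formedness property~\ref{wf-2}) partial assignment $M\land\adiff(B)$ in which $\Lambda\sigma$ holds and $C\sigma$ is false. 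Using the $\adiff(B)$ trick described before the rules — all constants in $B$ are forced distinct, so foreground-literal satisfiability reduces to syntactic complementarity — I would first argue that $\backgd(M)\land\adiff(B)\land\Lambda\sigma$ is satisfiable and that $\forgd(M)\models\neg(C\sigma)$ can be arranged, possibly after noting which foreground literals of $C\sigma$ are already defined in $M$.

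The key case split is on whether every foreground literal of $C\sigma$ is already defined in $M$. If yes, then since $M\land\adiff(B)$ falsifies $\Lambda'\parallel C'$, all of $C\sigma$ must be false in $\forgd(M)$ while $\backgd(M)\land\adiff(B)\land\Lambda\sigma$ is satisfiable; but then the side conditions of \textbf{Conflict} are met (with $D=\Lambda\parallel C$ and this $\sigma$), so Conflict applies, contradicting stuckness. If instead some foreground literal $L\sigma$ of $C\sigma$ is undefined in $M$, I would check whether the remaining literals $C_0\sigma$ of $C\sigma$ are all false in $\forgd(M)$: if so, the side conditions of \textbf{Propagate} are satisfied (after collecting the repeated occurrences of $L\sigma$ into $C_1$ for the exhaustive-factoring bookkeeping, and letting $\Lambda'\sigma$ be the not-yet-on-trail background literals), so Propagate applies — again contradicting stuckness. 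The one remaining subcase is that $C\sigma$ has at least one undefined foreground literal and at least one foreground literal that is undefined-or-true but not false; here I would invoke \textbf{Decide}: pick an undefined atom $|L\sigma|$; it lies in $\atoms(\mGnd_B(N\cup U))$ since it comes from a clause in $N\cup U$ ground by $B$-constants, the associated background constraint $\Lambda\sigma$ is satisfiable together with $\backgd(M)\land\adiff(B)$ (as it has a model by assumption), so Decide applies. In every case we contradict stuckness.

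The main obstacle I expect is the precise handling of the background constraint $\Lambda$ and making the semantic counterexample witness line up with the purely-syntactic side conditions of the rules. Concretely: from "$M\land\adiff(B)\not\models\Lambda'\parallel C'$" one gets some hierarchic algebra $\sigvalb$; I must extract from $\sigvalb$ both that $\backgd(M)\land\adiff(B)\land\Lambda\sigma$ is $\hspec$-satisfiable (immediate, since $\sigvalb$ witnesses it) and that the foreground part of $C\sigma$ is falsified by $\forgd(M)$ in the syntactic sense the rules demand. This last point is where the assumption that all constants of $B$ are domain constants and forced pairwise distinct by $\adiff(B)$ does the real work: it guarantees that two ground foreground literals built from $B$-constants are complementary iff they are syntactically complementary, so "$C\sigma$ false in $\forgd(M)$" is a syntactic condition one can read off, exactly matching the Conflict/Propagate preconditions. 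I would also need to note, for the Decide subcase, that the Decide precondition only requires $|L\sigma|$ and the $|K\sigma|$ from $\Lambda\sigma$ to be in $\atoms(\mGnd_B(N\cup U))$ and all background literals of $\Lambda\sigma$ undefined in $M$ — if some are already defined, one first adds the still-undefined ones and re-examines, but ultimately an undefined foreground atom from a not-yet-satisfied clause always gives a legal Decide step. Assembling these three applicable-rule cases completes the contrapositive and hence the lemma.
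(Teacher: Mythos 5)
Your proposal is correct and follows essentially the same route as the paper: by contradiction, a ground instance of $N\cup U$ not entailed by $M\land\adiff(B)$ yields an applicable conflict-search rule, contradicting stuckness. The paper's own proof compresses this to the Conflict case alone (its claim that the instance ``can only be not true'' if $\forgd(M)\models\neg(C\sigma)$ tacitly uses that in a stuck state no relevant foreground literal can be undefined, since otherwise Decide or Propagate would apply), so your explicit Propagate/Decide subcases merely spell out what the paper leaves implicit.
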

\begin{proof}
  By contradiction. Note that $M\land\adiff(B)$ is satisfiable, Lemma~\ref{lem:sclt-pres-wfs}.\ref{wf-2}.
  Consider any clause $(\Lambda\parallel C)\sigma\in \mGnd_B(N\cup U)$.
  It can only be not true in $M\land\adiff(B)$ if $\forgd(M) \models \neg (C\sigma)$
  and $\backgd(M)\land\adiff(B)\land\Lambda\sigma$ is satisfiable. But then Conflict would be applicable,
  a contradiction.
\end{proof}

\begin{example}[$\SCLT$ Model Extraction] \label{exa:modelextract}
  In some cases it is possible to extract an overall model from the ground trail of a stuck state of an $\SCLT$ derivation.
  Consider $B=[a,b,c]$ and a satisfiable $\BS(\LRA)$ constrained clause set
  $N = \{x\ge 1\parallel P(x),\allowbreak x<0\parallel P(x),\allowbreak 0\le x\land x<1\parallel \lnot P(x),\allowbreak 2x\ge 1\parallel P(x)\lor Q(x)\}$.
  Starting from state $(\epsilon;N;\emptyset;B;0;\top)$ and applying Propagate fairly a regular run can derive the following trail\newline
  \renewcommand{\arraystretch}{1.2}
  $\begin{array}{rl}
      M = & P(a)^{x\ge 1\parallel P(x)\cdot\{x\mapsto a\}}, a\ge 1, P(b)^{x<0\parallel P(x)\cdot\{x\mapsto b\}}, b<0,\\
         & \lnot P(c)^{0\le x\land x<1\parallel \lnot P(x)\cdot\{x\mapsto c\}},0\le c,c<1,
       Q(c)^{2x\ge1\parallel P\lor Q(x)\cdot\{x\mapsto c\}}, 2c\ge 1 \\
    \end{array}
    $

  \noindent  
  The state $(M;N;\emptyset;B;0;\top)$ is stuck and $M\models_\hspec\mGnd_B(N)$.
  Moreover from $M$ we can generate an interpretation $\sigval^{\BS(\LRA)}$ of $N$
  by generalizing the foreground constants used for instantiation and interpreting
  the predicates $P$ and $Q$ as formulas over $\sig^\bspec$, $P^{\sigval} = \{q\in\Rat \mid q < 0 \lor q \ge 1\}$
  and $Q^{\sigval} = \{q\in\Rat \mid 2q \ge 1 \land q < 1\}$.
\end{example}

\begin{lemma}[Soundness] \label{lem:sclsound}
  If a derivation reaches the state $(M;N;U;B;k;\Lambda\parallel \bot\cdot\sigma)$,
  then $N$ is unsatisfiable.
\end{lemma}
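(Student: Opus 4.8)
The plan is to read the final constrained closure $\Lambda\parallel\bot\cdot\sigma$ as an explicit refutation: its clausal part is a logical consequence of $N$, while its constraint $\Lambda$ can be satisfied over the background theory, and the two facts together exclude every hierarchic model of $N$. Since the reached state is well-formed by Lemma~\ref{lem:sclt-pres-wfs}, I would freely use the corresponding properties.

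First I would show $N\models_\hspec\Lambda\parallel\bot$. By Lemma~\ref{lem:sclt-pres-wfs}.\ref{wf-3} we already have $N\models_\hspec U$, so it suffices to prove $N\cup U\models_\hspec\Lambda\parallel\bot$. This is obtained exactly as in the proof of Lemma~\ref{lem:sclt-pres-wfs}.\ref{wf-3}: tracing the conflict-resolution phase backward from the final state yields a sequence of constrained closures $C_1\cdot\sigma_1,\dots,C_n\cdot\sigma_n$ with $C_n\cdot\sigma_n=\Lambda\parallel\bot\cdot\sigma$, where $C_1\in N\cup U$ is the clause introduced by Conflict and each $C_{j+1}$ arises from $C_j$ by Factorize or by Resolve against a clause of $N\cup U$; soundness of resolution and factoring then gives $N\cup U\models_\hspec C_j$ for all $j$, hence $N\cup U\models_\hspec\Lambda\parallel\bot$. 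Writing $\Lambda$ as a conjunction $\bigwedge_i L_i$, the clause $\Lambda\parallel\bot$ denotes $\bigvee_i\comp(L_i)$ with its variables $\bar x$ implicitly universally quantified, so $N\models_\hspec\Lambda\parallel\bot$ is the same as $N\models_\hspec\neg\exists\bar x.\,\Lambda$.

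Second I would extract satisfiability of the constraint. Since $D=\Lambda\parallel\bot\cdot\sigma$, Lemma~\ref{lem:sclt-pres-wfs}.\ref{wf-4a} tells us that $\backgd(M)\land\adiff(B)\land\Lambda\sigma$ is $\hspec$-satisfiable, hence so is the ground pure background formula $\Lambda\sigma$. As $\sigma$ instantiates the variables of $\Lambda$ by constants of $B\subseteq\opers^{\fspec}$ of the appropriate background sorts, the equivalence recorded in Section~\ref{sec:prelim} — $\hspec$-satisfiability of $\phi[\bar x\mapsto\bar a]$ iff $\models_\bspec\exists\bar x.\,\phi$ — applies and gives $\models_\bspec\exists\bar x.\,\Lambda$ (a minor bookkeeping point: if $\sigma$ identifies two variables, one first passes to the variable-identified version of $\Lambda$ and then weakens the existential). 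Then I combine the two facts: assume towards a contradiction that $N$ has a hierarchic model $\sigval$. Then $\sigval|_{\sig^{\bspec}}\in\modset^{\bspec}$, so $\models_\bspec\exists\bar x.\,\Lambda$ gives $\sigval|_{\sig^{\bspec}}\models\exists\bar x.\,\Lambda$, and since $\Lambda$ is pure background this lifts to $\sigval\models\exists\bar x.\,\Lambda$; but $\sigval\models_\hspec N$ and $N\models_\hspec\neg\exists\bar x.\,\Lambda$ force $\sigval\models\neg\exists\bar x.\,\Lambda$, a contradiction. Hence $N$ is unsatisfiable.

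The genuinely delicate step is reconciling that the constraint is only known to be \emph{satisfiable} (in some hierarchic algebra) with the fact that $N$ entails its negation in \emph{all} hierarchic models; this gap is precisely what the preliminaries' equivalence closes, by upgrading $\hspec$-satisfiability of the ground constraint $\Lambda\sigma$ to $\modset^{\bspec}$-validity of the existential closure $\exists\bar x.\,\Lambda$. Everything else is immediate from well-formedness (Lemma~\ref{lem:sclt-pres-wfs}) and the soundness of resolution and factoring already invoked in its proof.
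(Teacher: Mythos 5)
Your proposal is correct and follows essentially the same route as the paper: the paper's (two-line) proof cites exactly the two facts you use, namely that the conflict clause is a consequence of $N\cup U$ (by the argument behind Lemma~\ref{lem:sclt-pres-wfs}.\ref{wf-3}) and that $\backgd(M)\land\adiff(B)\land\Lambda\sigma$ is satisfiable (Lemma~\ref{lem:sclt-pres-wfs}.\ref{wf-4a}). You merely spell out the remaining semantic step --- lifting $\hspec$-satisfiability of the ground constraint to $\models_\bspec\exists\bar x.\,\Lambda$ via the equivalence stated in Section~\ref{sec:prelim} --- which the paper leaves implicit.
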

\begin{proof}
  All learned clauses are consequences of $N\cup U$, Lemma~\ref{lem:sclt-pres-wfs}.\ref{wf-3}.
  Furthermore $\backgd(M)\land\adiff(B)\land\Lambda\sigma$ is satisfiable, Lemma~\ref{lem:sclt-pres-wfs}.\ref{wf-4a}.
\end{proof}


\begin{definition}[Reasonable Run] \label{def:reasonable-run}
  A sequence of $\SCLT$ rule applications is called a \emph{reasonable run}
  if the rule Decide is only applied if there exists no application of the rule Propagate that would generate a conflict.
\end{definition}


  


\begin{definition}[Regular Run] \label{def:regular-run}
  A sequence of $\SCLT$ rule applications is called a \emph{regular run} if
  it is a reasonable run the rule Conflict has precedence over all
  other rules, and Resolve resolves away at least the rightmost
  foreground literal from the trail.
\end{definition}

\begin{example}[$\SCLT$  Refutation]
  Given a set of foreground constants $B=[a,b,c]$ and a $\BS(\LRA)$ constrained clause set  $N=\{
  C_1\colon x = 0 \parallel  P(x), \allowbreak
  C_2\colon  y= x + 1 \parallel \lnot P(x)\lor P(y),
  C_3\colon  z = 2 \parallel  \lnot P(z)\}$
  the following is a regular derivation
  \renewcommand{\arraystretch}{1.2}
  \[
    \begin{array}{ll}
      &(\epsilon;N;\emptyset;B;0;\top)\\
      \Rightarrow^{\text{Propagate}}_\SCLT&(P(a)^{C_1\cdot\{x\mapsto a\}},a=0;N;\emptyset;B;0;\top)\\
      \Rightarrow^{\text{Propagate}}_\SCLT&(\ldots,P(b)^{C_2\cdot\{x\mapsto a, y\mapsto b\}},b= a + 1;N;\emptyset;B;0;\top)\\
      \Rightarrow^{\text{Propagate}}_\SCLT&(\ldots,P(c)^{C_2\cdot\{x\mapsto b, y\mapsto c\}},c= b + 1;N;\emptyset;B;0;\top)\\
      \Rightarrow^{\text{Conflict}}_\SCLT&(\ldots,  P(c)^{C_2\cdot\{x\mapsto b, y\mapsto c\}},c= b + 1;N;\emptyset;B;0; z = 2 \parallel  \lnot P(z) \cdot\{z\mapsto c\})\\
      \Rightarrow^{\text{Resolve}}_\SCLT&(\ldots,P(c)^{C_2\cdot\{x\mapsto b, y\mapsto c\}},c= b + 1;N;\emptyset;B;0;\\
      &z= x + 1 \land z = 2\parallel \lnot P(x)\cdot\{z\mapsto c,x\mapsto b\})\\
      \Rightarrow^{\text{Skip}}_\SCLT&(\ldots,P(b)^{C_2\cdot\{x\mapsto a, y\mapsto b\}},b= a + 1;N;\emptyset;B;0;\\
      &z= x + 1 \land z = 2\parallel \lnot P(x)\cdot\{z\mapsto c,x\mapsto b\})\\
      \Rightarrow^{\text{Resolve}}_\SCLT&(\ldots,P(b)^{C_2\cdot\{x\mapsto a, y\mapsto b\}},b= a + 1;N;\emptyset;B;0;\\
      &z= x + 1 \land z = 2 \land x= x_1 + 1\parallel \lnot P(x_1)\cdot\{z\mapsto c,x\mapsto b, x_1\mapsto a\})\\
      \Rightarrow^{\text{Skip}}_\SCLT&(P(a)^{C_1\cdot\{x\mapsto a\}},a= 0;N;\emptyset;B;0;\\
      &z= x + 1 \land z = 2 \land x= x_1 + 1\parallel \lnot P(x_1)\cdot\{z\mapsto c,x\mapsto b, x_1\mapsto a\})\\
      \Rightarrow^{\text{Resolve}}_\SCLT&(P(a)^{C_1\cdot\{x\mapsto a\}},a= 0;N;\emptyset;B;0;\\
      &z= x + 1 \land z = 2 \land x= x_1 + 1\land x_1 = 0 \parallel \bot\cdot\{z\mapsto c,x\mapsto b, x_1\mapsto a\})\\
    \end{array}
  \]
  $N$ is proven unsatisfiable as we reach a state in the form $(M;N;U;B;k;\Lambda\parallel \bot\cdot\sigma)$.
\end{example}

\begin{example}[$\SCLT$ Clause learning]
  Given an initial constant set $B = [a]$ of fresh foreground constants
  and a $\BS(\LRA)$  constrained clause set
  $ N =\{
  C_1\colon            x\ge y\parallel  \lnot P(x,y) \lor Q(z),\allowbreak
  C_2\colon          z = u+v \parallel \lnot P(u,v) \lor\lnot Q(z),\allowbreak
  \}$
  the following is an example of a regular run\newline
  \renewcommand{\arraystretch}{1.2}
  \[
    \begin{array}{ll}
      &(\epsilon;N;\emptyset;B;0;\top)\\
      \Rightarrow_\SCLT^{Decide}&(P(a,b)^1;N;\emptyset;B;1;\top)\\
      \Rightarrow_\SCLT^{Propagate}&(P(a,a)^1,Q(a)^{C_1\cdot\{x\mapsto a,y\mapsto a,z\mapsto a\}},a\ge a;N;\emptyset;B;1;\top)\\
      \Rightarrow_\SCLT^{Conflict}&(P(a,a)^1,Q(a)^{C_1\cdot\{u\mapsto a,v\mapsto a,z\mapsto a\}},a\ge a;N;\emptyset;B;1;\\
      &C_2\cdot\{x\mapsto a,y\mapsto a,z\mapsto a\})\\
      \Rightarrow_\SCLT^{Resolve}&(P(a,a)^1,Q(a)^{C_1\cdot\{x\mapsto a,y\mapsto a,z\mapsto a\}},a\ge a;N;\emptyset;B;1;x\ge y \land z= u+v\parallel \\
      &\lnot P(x,y)\lor\lnot P(u,v)\cdot\{x\mapsto a,y\mapsto a,z\mapsto a,u\mapsto a,v\mapsto a\})\\
      \Rightarrow_\SCLT^{Skip*}&(P(a,a)^1;N;\emptyset;B;1;x\ge y\land z= u+v\parallel \\
      &\lnot P(x,y)\lor\lnot P(u,v)\cdot\{x\mapsto a,y\mapsto a,z\mapsto a,u\mapsto a,v\mapsto a\})\\
      \Rightarrow_\SCLT^{Factorize}&(P(a,a)^1;N;\emptyset;B;1;x\ge y\land z= x+y\parallel  \lnot P(x,y)\cdot\{x\mapsto a,y\mapsto a,z\mapsto a\})\\
      \Rightarrow_\SCLT^{Backtrack}&(\lnot P(a,a)^{(x\ge y\land z= x+y\parallel  \lnot P(x,y))\cdot\{x\mapsto a,y\mapsto a\}}, a\ge a, a= a+a;N;\\
      &\{x\ge y\land z= x+y\parallel  \lnot P(x,y)\};B;1;\top)\\
    \end{array}
  \]
  \noindent
  In this example the learned clauses $x\ge y\land z= x+y\parallel  \lnot P(x,y)$; note how there are two distinct variables in the learned clause even if we had to use a single constant for instantiations in conflict search.
\end{example}


\begin{proposition}\label{prop:no-decide-conflict}
  Let $N$ be a set of constrained clauses.
  Then any application of Decide in an $\SCLT$ regular run from starting state $(\epsilon;N;\emptyset;B;0;\top)$ does not create a conflict.
\end{proposition}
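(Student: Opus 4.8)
The plan is a proof by contradiction that exploits that in a regular run Conflict has precedence over Decide. Concretely, I read ``Decide creates a conflict'' as: immediately after an application of Decide the rule Conflict is applicable, i.e.\ some ground instance from $\mGnd_B(N\cup U)$ is false on the trail with a satisfiable constraint. So assume a regular run applies Decide at a state $(M;N;U;B;k;\top)$, producing $(M,H^{k+1},E;N;U;B;k+1;\top)$, where $H$ is the decision literal and $E$ is the (possibly empty) sequence of background literals adjoined by the rule, and assume Conflict is applicable to the resulting state, witnessed by a clause $\Lambda'\parallel D'\in N\cup U$ and a grounding $\tau$ with $\cdom(\tau)\subseteq B$, $\forgd(M,H^{k+1},E)\models\lnot(D'\tau)$, and $\adiff(B)\land\backgd(M,H^{k+1},E)\land\Lambda'\tau$ satisfiable. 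First I would record what Decide changed on the trail: it adjoins exactly one foreground literal, namely $H$, which is undefined in $M$ by the precondition of the rule, and every further adjoined literal is a background literal; hence $\forgd(M,H^{k+1},E)=\forgd(M)\cup\{H\}$ and $\backgd(M,H^{k+1},E)\supseteq\backgd(M)$. Since the background side condition of Conflict is a conjunction, deleting the conjuncts contributed by the Decide step preserves satisfiability, so $\adiff(B)\land\backgd(M)\land\Lambda'\tau$ is satisfiable as well.

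Next I would case-split on whether $\comp(H)$ occurs among the literals of $D'\tau$. Passing from the trail $\forgd(M)$ to $\forgd(M)\cup\{H\}$, the only literal that becomes newly false is $\comp(H)$; so if $\comp(H)\notin D'\tau$, then $D'\tau$ is already false in $\forgd(M)$, \ie $\forgd(M)\models\lnot(D'\tau)$. Together with satisfiability of $\adiff(B)\land\backgd(M)\land\Lambda'\tau$ and $\cdom(\tau)\subseteq B$, this shows that Conflict was applicable already at $(M;N;U;B;k;\top)$, contradicting that a regular run applies Conflict with precedence over Decide. Hence it remains to derive a contradiction under the assumption $\comp(H)\in D'\tau$.

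In that case I would exhibit a Propagate step available at the pre-Decide state $(M;N;U;B;k;\top)$: write $D'=D_0\lor D_1$, where $D_1$ collects exactly those literals of $D'$ whose $\tau$-image is $\comp(H)$ (a non-empty collection, factored as prescribed by the rule) and $D_0$ the remaining literals; one checks $\forgd(M)\models\lnot(D_0\tau)$, $D_0\tau$ contains no $\comp(H)$, $\comp(H)$ is undefined in $M$, $\cdom(\tau)\subseteq B$, and $\adiff(B)\land\backgd(M)\land\Lambda'\tau$ is satisfiable, so Propagate applies with clause $\Lambda'\parallel D'$, grounding $\tau$, and propagated literal $\comp(H)$. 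The step I expect to be the main obstacle is to show that this Propagate \emph{generates a conflict} in the sense of Definition~\ref{def:reasonable-run}, which is exactly what contradicts the reasonableness of the run: one has to pin down which ground clause of $\mGnd_B(N\cup U)$ becomes false once $\comp(H)$ together with the background literals of $\Lambda'\tau$ is on the trail — being careful that the background literals added by this Propagate are not the same as those $E$ added by the Decide step — and here one may additionally need that a regular run applies Conflict eagerly. Once that is settled, the definition of a reasonable run is violated, which completes the proof.
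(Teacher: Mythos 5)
Your overall plan coincides with the paper's: argue by contradiction, observe that the Decide step adds exactly one foreground literal $H$, split on whether $\comp(H)$ occurs in the would-be conflict clause $D'\tau$, dispose of the case $\comp(H)\notin D'\tau$ by the precedence of Conflict in a regular run, and in the remaining case exhibit the Propagate step with clause $\Lambda'\parallel D'$, grounding $\tau$ and propagated literal mapping to $\comp(H)$ that was already available before the Decide. That bookkeeping (only $\comp(H)$ becomes newly false; dropping the background conjuncts contributed by Decide preserves satisfiability of $\adiff(B)\land\backgd(M)\land\Lambda'\tau$) is correct and is in fact more explicit than the paper's two-line proof, which only says that Propagate would have been applicable before Decide and that this contradicts regularity, hence reasonableness.

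The step you yourself flag as the main obstacle is, however, a genuine gap, and it cannot be closed in the way you propose. You want to exhibit a ground clause of $\mGnd_B(N\cup U)$ that becomes false once $\comp(H)$ and the background literals of $\Lambda'\tau$ are placed on the trail, so that the exhibited Propagate ``generates a conflict'' in the literal sense of Definition~\ref{def:reasonable-run}. No such clause need exist: if, say, the would-be conflict clause is a unit clause $\parallel P(x)$ and the decision is $\neg P(a)$, the only available Propagate puts $P(a)$ on the trail, after which nothing is false, yet the decision does create a conflict with $P(x)\cdot\{x\mapsto a\}$; eager application of Conflict does not help here either. The paper never attempts this step: its proof takes the mere availability of the Propagate from the would-be conflict clause as already contradicting the reasonable-run condition, i.e.\ it reads the ``conflict generated'' as the very conflict that the decision would produce by deciding against an available propagation, not as a conflict arising on the trail after the propagation has been performed. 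So a complete proof of Proposition~\ref{prop:no-decide-conflict} has to invoke Definition~\ref{def:reasonable-run} in that intended sense; under the strict reading you adopted, the missing implication is not merely unproven but false in general. Your instinct that something extra is needed at exactly this point is sound---it is the thinnest point of the paper's own argument---but the repair is interpretive (how the reasonableness condition is meant to be applied), not a further search for a falsified clause.
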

\begin{proof}
  Assume the contrary: then Propagate would have been applicable before Decide, contradicting with
  the definition of a regular and hence reasonable run.
\end{proof}

\begin{corollary}\label{corol:resolve-after-conflict}
Let $N$ be a set of constrained clauses.
Then any conflict in an $\SCLT$ regular run from starting state $(\epsilon;N;\emptyset;B;0;\top)$ admits a regular conflict resolution.
\end{corollary}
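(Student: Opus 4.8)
\textit{Proof plan.} The goal is to show that, once rule Conflict has produced a conflict state in a regular run, the conflict–resolution phase can always be completed by applying only Skip, Resolve, Factorize and Backtrack in a way compatible with regularity — in particular, every Resolve step resolving the rightmost foreground literal of the trail — and that doing so reaches either the final state $(\epsilon;N;U;B;0;\Lambda\parallel\bot)$ or a state with last component $\top$ obtained by Backtrack. The plan rests on three ingredients: Proposition~\ref{prop:no-decide-conflict} (no application of Decide in a regular run creates a conflict); the proposition on the form of stuck states, whose proof already establishes that during conflict resolution one of Skip, Resolve, Factorize, Backtrack is always applicable; and well-formedness of every reachable state, Lemma~\ref{lem:sclt-pres-wfs}. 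What remains is to check that a \emph{regular} such step is always available and that iterating them terminates.

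First I would pin down the shape of the trail at the moment Conflict fires. By Proposition~\ref{prop:no-decide-conflict} and since Conflict has precedence in a regular run, a conflict is never created by Decide. Hence either it is already present in the initial state or in a state just reached by Grow — but then $M=\epsilon$, $\forgd(M)=\emptyset$, so the conflict clause is necessarily $\Lambda\parallel\bot\cdot\sigma$ and the phase is trivially done — or it is created by the immediately preceding application of Propagate, which appended a foreground literal $L\sigma$ followed by a sequence $\Lambda'\sigma$ of background literals. Since Conflict was not applicable to the conflicting clause $\Lambda\parallel D$ at the state before that Propagate, and since appending the background literals $\Lambda'\sigma$ only strengthens the background constraint $\adiff(B)\land\backgd(M)\land\Lambda\sigma$ (so cannot make an unsatisfiable constraint satisfiable), the conflicting clause became false in $\forgd(M)$ precisely because of $L\sigma$; hence $\comp(L\sigma)$ occurs in $D\sigma$. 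After Skipping the trailing background literals $\Lambda'\sigma$, the rightmost foreground literal of the trail is $L\sigma$, a \emph{propagated} literal whose complement occurs in the conflict clause, so, merging duplicate occurrences of $\comp(L\sigma)$ by Factorize if needed, Resolve applies to $L\sigma$ — a regular step.

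Then I would iterate. At any conflict-resolution state — well-formed by Lemma~\ref{lem:sclt-pres-wfs} — I inspect the rightmost trail literal $K$: if $K$ is a background literal, or a foreground literal with $\comp(K)\notin D\sigma$, Skip applies (a literal just eliminated by Resolve falls here, because the side condition of Propagate forces the residual $C_0\sigma$ of the propagating clause to contain neither $L\sigma$ nor $\comp(L\sigma)$, so the resolvent contains neither $K$ nor $\comp(K)$); if $K$ is a propagated foreground literal with $\comp(K)\in D\sigma$, apply Factorize until $\comp(K)$ occurs once and then Resolve $K$ — again the rightmost foreground literal, hence regular — after which $K$ can be Skipped; if $K$ is a foreground decision literal with $\comp(K)\in D\sigma$, then since no foreground literal lies to its right, $\comp(K)$ is the only literal of its level in $D\sigma$, so after exhausting Factorize the conflict clause has the form $\Lambda\parallel C'\lor\comp(K)\cdot\sigma$ with $C'\sigma$ of strictly smaller level and Backtrack applies, ending the phase. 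In every case a regular step exists. For termination, each Factorize strictly decreases the size of the conflict clause, while Resolve introduces into it only literals whose complements lie strictly to the left of the resolved literal on the trail; consequently the resolved literal is then skipped and the trail strictly shrinks along the phase until it is emptied (forcing $D\sigma=\bot$) or Backtrack is taken. This gives a regular conflict resolution.

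The hard part will be the decision-literal case: one must be certain that the calculus is never forced to resolve a literal other than the rightmost foreground one, i.e.\ that whenever the rightmost foreground literal is a decision literal that cannot be skipped it is — after exhaustive factoring — the unique literal of maximal level, so that Backtrack, not an out-of-order Resolve, is the applicable rule; this hinges on the fact that the only foreground literal of the current decision level on the trail to the right of the last decision literal is that decision literal itself when no propagation has followed it. The secondary point requiring care is the termination measure for the phase, which, as above, relies on Resolve never re-introducing a literal whose complement sits at or to the right of the resolved trail literal.
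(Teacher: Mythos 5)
Your core argument for the key step is the paper's own: by Proposition~\ref{prop:no-decide-conflict} together with the precedence of Conflict in a regular run, the rightmost foreground trail literal at conflict time is a propagated literal whose complement occurs in the conflict clause, so after skipping the trailing background literals (and factorizing duplicate occurrences) Resolve applies to exactly that literal, which is what regularity demands. Two small remarks: your trichotomy of how the conflict arose (initial state, after Grow, after Propagate) omits the case that the immediately preceding step was Backtrack, which also appends a propagated foreground literal followed by background literals and is handled by the same reasoning — the paper covers both cases uniformly by just saying the rightmost foreground literal is a propagation literal, though there the justification that the conflicting clause was not already false must refer to the last state with conflict component $\top$ having that trail prefix rather than to ``the state before that Propagate''. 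Your third paragraph, establishing that the whole resolution phase can be completed and terminates, goes beyond what the corollary's proof in the paper contains; that part is already provided elsewhere by the proposition on the form of stuck states (one of Skip, Resolve, Factorize, Backtrack is always applicable during conflict resolution) and by the termination measure of Lemma~\ref{thm:termination}, so it is sound but redundant here.
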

\begin{proof}We need to prove that it is possible to apply Resolve during conflict resolution. By Proposition~\ref{prop:no-decide-conflict} the rightmost foreground literal
  on the trail is a propagation literal and by regularity we know that this literal appears in the conflict clause. So a conflict resolution can start by skipping over the
  background literals and then resolving once with the rightmost foreground literal.
\end{proof}

\begin{lemma}[Non-Redundant Clause Learning]\label{lemm:non-red}
  Let $N$ be a set of constrained clauses.
  Then clauses learned in an $\SCLT$ regular run from starting state $(\epsilon;N;\emptyset;B;0;\top)$ are not
  redundant.
\end{lemma}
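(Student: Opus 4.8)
The plan is to show that every clause $\Lambda \parallel D \lor L$ added to $U$ by an application of Backtrack in a regular run is non-redundant with respect to the current clause set $N \cup U$. By Definition~\ref{prelim:def:redundancy} it suffices to exhibit, for some ground instance $(\Lambda \parallel D \lor L)\tau$ with constants from $B$, a failure of the entailment $(\bigcup_{C \in N \cup U} \mGnd_B(C))^{\preceq (\Lambda\parallel D\lor L)\tau} \models_\hspec (\Lambda \parallel D \lor L)\tau$; here $\tau$ is the grounding substitution $\sigma$ carried along the conflict closure at the moment of Backtrack, so that $(\Lambda\parallel D\lor L)\sigma$ is a ground constrained clause whose constants all lie in $B$ (by well-formedness, Lemma~\ref{lem:sclt-pres-wfs}.\ref{wf-1} and the fact that $\cdom(\sigma)\subseteq B$). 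The order $\prec$ we use is the $\hspec$-order induced by the trail $M$ immediately before Backtrack (more precisely, by the trail $M,K^{i+1},M'$ in the Backtrack rule).

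The key steps, in order: First, observe that at the point where Backtrack fires we are in a state $(M,K^{i+1},M';N;U;B;k;(\Lambda\parallel D\lor L)\cdot\sigma)$ in which, by well-formedness \ref{wf-4a}, $(D\lor L)\sigma$ is false in $\forgd(M,K^{i+1},M')$ and $\backgd(M,K^{i+1},M')\land\adiff(B)\land\Lambda\sigma$ is satisfiable; moreover $L\sigma$ is of level $k$ and $D\sigma$ of level $i<k$, so after Backtrack all literals of $D\sigma$ are still falsified by $\forgd(M)$. Second, I would pick the grounding instance $(\Lambda\parallel D\lor L)\sigma$ and show it cannot be entailed by smaller ground instances of $N\cup U$ under $\prec$. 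The decisive point: since $L\sigma$ has level $k$, the complementary literal $\comp(L\sigma)$ sits to the right of the last decision literal $K^{i+1}$ that we are about to remove and in fact is the $k$-th decision literal itself — here we invoke Corollary~\ref{corol:resolve-after-conflict} and regularity to argue that conflict resolution resolved away all propagated foreground literals of level $k$, so the only remaining level-$k$ literal is $L\sigma$, and it must coincide (up to complement) with the decision literal introduced at step $k$. Consequently $\comp(L\sigma)$ is $\prec$-maximal among all the trail literals whose atoms can appear, and every ground clause of $N\cup U$ falsified by the trail prefix strictly below $L\sigma$ does not mention $L\sigma$'s atom — hence the set $(\mGnd_B(N\cup U))^{\preceq(\Lambda\parallel D\lor L)\sigma}$ is, when restricted to its foreground parts, satisfied by the assignment $\forgd(M)$ extended by setting $\comp(L\sigma)$ true, which refutes the required entailment since that same assignment falsifies $(D\lor L)\sigma$ while satisfying its background constraint.

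Concretely, I would argue as follows for the non-entailment. Let $M_0$ be the prefix of the trail up to but not including the decision literal of level $k$. Build the $\hspec$-interpretation $\mathcal A$ that (i) on the background part realizes a model of $\backgd(M)\land\adiff(B)\land\Lambda\sigma$ — which exists by \ref{wf-4a} together with the argument that $\backgd$ is unaffected by Skip/Resolve — and (ii) on foreground atoms agrees with $\forgd(M)$ on all atoms defined in $M$, and in particular makes $\comp(L\sigma)$ true (this is consistent because $L\sigma$ was the level-$k$ decision, hence $\comp(L\sigma)$ does occur in $M$... actually $L\sigma$ occurs, so we need $\mathcal A$ to make $L\sigma$ false, consistent with $M$ only after the backtrack; the point is we use the trail $M$ after Backtrack plus the newly propagated $L\sigma$ is what we must avoid, so instead we use the trail before the decision). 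Then $\mathcal A \models (\mGnd_B(N\cup U))^{\prec (\Lambda\parallel D\lor L)\sigma}$: any such ground clause has all its foreground literals among atoms occurring strictly left of $L\sigma$ on the trail, so it was already satisfied by $\forgd(M_0)$ (otherwise Conflict, which has precedence by regularity, would have fired earlier), and $\mathcal A$ extends $\forgd(M_0)$; its background constraint is consistent with $\backgd(M)\land\adiff(B)$ so $\mathcal A$ can be taken to satisfy it. But $\mathcal A \not\models (\Lambda\parallel D\lor L)\sigma$ because $\mathcal A$ satisfies $\Lambda\sigma$ and falsifies $D\sigma$ (all its literals are false in $\forgd(M_0)\subseteq\mathcal A$) and falsifies $L\sigma$ (we set $\comp(L\sigma)$ true). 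Finally, the clause $(\Lambda\parallel D\lor L)\sigma$ is $\preceq$-maximal in the relevant ground set because its maximal foreground literal $L\sigma$ is $\prec$-maximal among all defined trail literals — this is where the induced-order definition and regularity that conflict resolution removed the rightmost foreground literal are used. Hence the redundancy entailment fails, so the learned clause is non-redundant.

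The main obstacle I expect is the bookkeeping around levels and the induced ordering: one must carefully verify that after all the Skip/Factorize/Resolve steps of a regular conflict resolution, the conflict closure $(\Lambda\parallel D\lor L)\sigma$ has exactly one foreground literal of the current maximal level $k$, namely $L\sigma$, and that this $L\sigma$ is $\prec$-greater (in the $M$-induced $\hspec$-order) than every other foreground literal appearing in any trail-falsified ground instance of $N\cup U$. This requires Corollary~\ref{corol:resolve-after-conflict}, Proposition~\ref{prop:no-decide-conflict}, and the regularity clause "Resolve resolves away at least the rightmost foreground literal from the trail", combined with the fact that regular runs apply Conflict eagerly so that no clause of $N\cup U$ was falsified by a proper trail prefix before $L\sigma$ was added. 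A secondary, more technical point is handling the background constraint $\Lambda\sigma$: one must confirm that the background part of the redundant-entailment premise $(\mGnd_B(N\cup U))^{\preceq\cdot}$ cannot force $\neg\Lambda\sigma$, which follows because background literals are $\prec$-smaller than foreground ones and $\backgd(M)\land\adiff(B)\land\Lambda\sigma$ is satisfiable by \ref{wf-4a}.
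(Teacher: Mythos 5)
Your overall strategy (use the $\hspec$-order induced by the trail and argue that a $\preceq$-smaller falsified instance would contradict regularity/eager Conflict) is the same family of argument as the paper's, but the concrete execution has a genuine flaw. The decisive step in your plan is the claim that after a regular conflict resolution the surviving level-$k$ literal $L\sigma$ ``must coincide (up to complement) with the decision literal introduced at step $k$,'' so that every $\preceq$-smaller ground instance only mentions atoms defined strictly below the level-$k$ decision, i.e.\ in your prefix $M_0$. This is false for $\SCLT$: Backtrack only requires that $L\sigma$ is of level $k$ and the rest of the clause is of level $i$; regularity (via Proposition~\ref{prop:no-decide-conflict} and Corollary~\ref{corol:resolve-after-conflict}) only guarantees that the \emph{rightmost} trail foreground literal, which is a propagated literal, is resolved away. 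Hence $\comp(L\sigma)$ may well be a propagated literal of level $k$ lying strictly between the level-$k$ decision and the end of the trail. Consequently a $\preceq$-smaller ground instance may contain level-$k$ atoms that are defined on the trail but not in $M_0$; your interpretation $\mathcal A$, which is only pinned down on $M_0$ plus $\comp(L\sigma)$, says nothing about such clauses, and your justification ``it was already satisfied by $\forgd(M_0)$, otherwise Conflict would have fired earlier'' does not apply to them (they may become false only after level-$k$ propagations). The construction is also internally inconsistent about the truth value of $L\sigma$ (you want $\mathcal A$ to falsify $L\sigma$, yet the trail that falsifies the learned clause already makes $L\sigma$ false, so no flip relative to the trail is needed) — a confusion you noticed yourself mid-argument but did not resolve.

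The paper's proof avoids all of this by comparing against the rightmost trail literal rather than the level-$k$ decision: assuming redundancy, some ground instance $\Lambda'\parallel C'\preceq \Lambda_n\sigma\parallel C_n\sigma$ must be false under the trail with $\adiff(B)\land\backgd(M)\land\Lambda'$ satisfiable (using Lemma~\ref{lem:sclt-pres-wfs}); since in a regular run the rightmost foreground trail literal is resolved away from the learned clause (Corollary~\ref{corol:resolve-after-conflict}), that literal cannot occur (complemented) in $C'$, so $C'$ was already false at an earlier state of the run, where Conflict — which has precedence — would have applied, contradicting regularity. If you replace your decision-literal identification and the $M_0$-based countermodel by this ``already false before the last propagation'' argument, your approach collapses into the paper's; as written, the key step does not hold.
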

\begin{proof}
  Consider the following fragment of a derivation learning a clause:\newline
  \renewcommand{\arraystretch}{1.2}
  \centerline{$\begin{array}{ll}
    \Rightarrow^{\text{Conflict}}_{\SCLT} & (M'';N;U;B;k;\Lambda_0\parallel C_0\cdot\sigma_0)\\
    \Rightarrow^{\{\text{Skip, Factorize, Resolve}\}^*}_{\SCLT} & (M,K^{i+1},M';N;U;B;k;\Lambda_n\parallel C_n\cdot\sigma_n)\\
    \Rightarrow^{\text{Backtrack}}_{{\SCLT}} & (M,L\sigma^{(\Lambda_n\parallel  D\lor L)\cdot\sigma},\Lambda_n'\sigma;N;U\cup\{\Lambda_n\parallel D\lor L\};B;i;\top).\\
   \end{array}$}

 \noindent
 where $C_n = D\lor L$ and $\sigma = \sigma_n$.
  Let $\prec$ be any $\hspec$ order induced by $M$. We prove that $\Lambda_n\sigma\parallel C_n\sigma$
  is not redundant with respect to $\prec$, $B$, and $(N\cup U)$.
  By soundness of hierarchic resolution $(N\cup U) \models \Lambda_n\parallel C_n$
  and $\Lambda_n\sigma$ is satisfiable with $M\land\adiff(B)$, and $C_n\sigma$ is false under both $M$ and $M,K^{i+1},M'$, Lemma~\ref{lem:sclt-pres-wfs}.
  For a proof by contradiction, assume there is a
  $N'\subseteq\mGnd_B(N\cup U)^{\preceq \Lambda_n\sigma\parallel C_n\sigma}$
  such that $N'\models_\hspec\Lambda_n\sigma\parallel C_n\sigma$.
   As $\Lambda_n\sigma\parallel C_n\sigma$ is false under $M$,
  there is a ground constrained clause $\Lambda'\parallel C'\in N'$ with $\Lambda'\parallel C' \preceq \Lambda_n\sigma\parallel C_n\sigma$,
  and all literals from $C'$ are defined in $M$ and false by the definition of $\prec$.
  Furthermore, we can assume that $\adiff(B)\land\backgd(M)\land\Lambda'$ is satisfiable or 
  $C_n\sigma$ would be a tautology, because $\adiff(B)\land\backgd(M)\land\Lambda_n\sigma$ is satisfiable. 
  
  The clause $\Lambda_0\sigma_0\parallel C_0\sigma_0$ has at least one literal of level $k$ and due to a regular run, Definition~\ref{def:regular-run},
  the rightmost trail literal is resolved away in $\Lambda_n\sigma\parallel C_n\sigma$, Corollary~\ref{corol:resolve-after-conflict}. Therefore, the rightmost foreground literal does not appear in
  $\Lambda'\parallel C'$, so by regularity $\Lambda'\parallel C'$ would have created a conflict at a previous state.
\end{proof}

Of course, in a regular run the ordering of foreground literals on the trail will change, i.e., the ordering
underlying Lemma~\ref{lemm:non-red} will change as well. Thus the non-redundancy property of Lemma~\ref{lemm:non-red}
reflects the situation at the time of creation of the learned clause. A non-redundancy property holding for
an overall run must be invariant against changes on the ordering. However, the ordering underlying Lemma~\ref{lemm:non-red} also entails 
a fixed subset ordering that is invariant against changes on the overall ordering. This means that our dynamic ordering entails non-redundancy criteria based on subset relations including forward redundancy.
From an implementation perspective, 
this means that learned clauses need
not to be tested for forward redundancy. Current resolution, or superposition based provers spent a reasonable portion
of their time in testing forward redundancy of newly generated clauses. In addition, also tests for backward reduction can be
restricted knowing that learned clauses are not redundant.

\begin{lemma}[Termination of $\SCLT$]\label{thm:termination}
  Let $N$ be a set of constrained clauses and $B$
  be a finite set of background constants.
  Then any regular run with start state
  $(\epsilon;N;\emptyset;B;0;\top)$
  that uses Grow only finitely often terminates.
\end{lemma}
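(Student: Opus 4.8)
The plan is to reduce to a \emph{Grow}-free run and then carry out the classical CDCL/Abstract‑DPLL termination argument; the one new observation is that the termination measure only has to track the \emph{foreground} part of the trail, which ranges over a fixed finite set even though the learned constraints (and hence the background literals that may appear on the trail) need not. First I would dispose of \emph{Grow}: since it is used only finitely often, an infinite run would possess an infinite suffix that is itself a \emph{Grow}-free regular run starting, by Lemma~\ref{lem:sclt-pres-wfs}, from a well-formed state $(\epsilon;N;U_0;B;0;\top)$, so it suffices to show every such run is finite. Fix $B$ and let $n := |\{\, P(\bar b) \mid P \text{ a foreground predicate occurring in } N,\ \bar b \in B^{\arity(P)}\,\}|$. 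Because Resolve and Factorize create no new predicates, the foreground predicates in $N\cup U$ are always those of $N$; because all clauses are abstracted and $\cdom(\sigma)\subseteq B$ whenever a literal is pushed onto the trail (Propagate, Decide) or learned (Backtrack, by well-formedness), every foreground literal ever on the trail has one of these $n$ atoms, and Propagate/Decide/Backtrack only add literals undefined on the current trail. Hence $|\forgd(M)|\le n$ and the number of decision literals is at most $n$ on every reachable trail.

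Next I would attach to every state with $D=\top$ the measure $f(M):=(\,n-|M^{0}|,\;n-|M^{1}|,\dots,n-|M^{k}|,\;\infty,\infty,\dots\,)$, where $M^{i}$ is the set of foreground literals of level $i$ and tuples are compared lexicographically. By the bounds above, $f$ lands in a set of tuples of length at most $n+1$ over the well-order $\{0,\dots,n\}\cup\{\infty\}$, which is a well-founded order. A short case analysis shows Propagate (one fresh foreground literal at the current level) and Decide (a new level holding one literal) strictly decrease $f$, while Conflict leaves the trail, hence $f$, unchanged.

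Then I would treat conflict resolution as a single block. After Conflict only Skip, Resolve, Factorize and Backtrack apply; none changes $N\cup U$ and none extends the trail, so the trail only shrinks. Using regularity (Definition~\ref{def:regular-run}) together with the well-formedness invariant that a propagating clause is false on the trail strictly below its propagated literal (Lemma~\ref{lem:sclt-pres-wfs}.\ref{wf-4b}), one shows that after a Resolve step which removes the rightmost foreground trail literal, that literal's complement no longer occurs in the conflict clause, so no further Resolve can fire until a Skip has removed it; hence there is at most one Resolve between consecutive Skips, at most $|M|$ Skips, and only finitely many Factorize steps in between (each strictly shrinks the momentarily finite conflict clause). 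By the Form of Stuck States proposition conflict resolution never gets stuck, so this finite block ends either at a state with $D=\Lambda\parallel\bot\cdot\sigma$ — after which the run just performs finitely many further Skips and halts — or in a Backtrack. Comparing the $\top$-state just before the triggering Conflict with the $\top$-state produced by that Backtrack: Backtrack keeps the trail prefix through level $i$ (untouched, since conflict resolution only removed literals to its right) and adds one foreground literal undefined there, so coordinate $i$ of $f$ strictly decreases while coordinates $0,\dots,i-1$ stay equal. Thus along the subsequence of $\top$-states $f$ strictly decreases, so there are finitely many $\top$-states, finitely many conflict-resolution steps between them, and the run is finite.

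The step I expect to be the real work is the conflict-resolution bookkeeping in the third paragraph: making precise that in a regular run every Resolve genuinely ``consumes'' the rightmost foreground trail literal — which needs both the well-formedness invariant on propagating clauses and the reading of regularity under which the requisite Factorize steps are carried out — so that a Skip must intervene before the next Resolve. Everything else is a routine adaptation of the termination proof for $\SCL$~\cite{FioriWeidenbach19}; the only genuinely new point is that, unlike in the propositional setting, the number of background literals that can appear on the trail is not bounded a priori (learned constraints may grow without bound), which is exactly why the measure is defined purely in terms of the foreground trail.
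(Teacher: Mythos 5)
Your proof is correct in outline but takes a genuinely different route from the paper. The paper also reduces to the suffix after the last application of Grow, but then uses one five-component lexicographic measure $(u,s,m,r,d)$ whose leading component $u = 3^l - |\mGnd_B(U)|$ (with $l=|\atoms(\mGnd_B(N\cup U))|$) strictly decreases at every Backtrack; to see this it invokes non-redundancy of learned clauses (Lemma~\ref{lemm:non-red}), i.e., every clause learning contributes a fresh ground instance over $B$, so the number of Backtracks is bounded by $3^l$. You instead use the classical CDCL trail measure (level-wise lexicographic comparison of the foreground trail) and argue that Backtrack strictly extends the kept prefix by a foreground literal undefined in it, which is exactly what Lemma~\ref{lem:sclt-pres-wfs}.\ref{wf-4b} gives. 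Your route is more elementary in that it needs no appeal to non-redundancy at all (and hence in fact shows termination of any Grow-free run, not only regular ones), while the paper's measure additionally yields an explicit bound on the number of learned clauses; both proofs must separately establish finiteness of each conflict-resolution phase.

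The one step that is wrong as stated is the bookkeeping you yourself flagged: ``after a Resolve \ldots that literal's complement no longer occurs in the conflict clause, so no further Resolve can fire until a Skip has removed it; hence there is at most one Resolve between consecutive Skips.'' Resolve does not remove the trail literal, and the paper explicitly remarks after the rule that the conflict clause may still contain further copies of $\comp(L\rho)$ that have not been factorized; regularity (Definition~\ref{def:regular-run}) does not force Factorize to be applied first, so several Resolve steps can occur with no intervening Skip. The repair stays entirely within your framework and uses precisely the invariant you cite: by Lemma~\ref{lem:sclt-pres-wfs}.\ref{wf-4b} the remainder $C\rho$ of the propagating clause is false strictly below $L\rho$ while $L\rho$ is undefined there, so $C\rho$ contains neither $L\rho$ nor $\comp(L\rho)$; hence each Resolve strictly decreases the number of occurrences of $\comp(L\rho)$ in the ground conflict clause, Factorize strictly decreases the number of literals, and Skip strictly decreases the trail length. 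Ordering these three quantities lexicographically shows each conflict-resolution block is finite; this is exactly the $(m,r,d)$ tail of the paper's measure, so with this adjustment your argument goes through.
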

\begin{proof}
  Since Grow can only be used a finite number of times
  we consider as a start state the state after the final application
  of Grow and prove termination of runs that never use Grow.
  We do so by giving an explicit termination measure on the $\SCLT$ states.
  Given a state $(M;N;U;B;k;D)$ we define a termination measure $\mu$ as
  $\mu(M;N;U;B;k;D) = (u,s,m,r,d) \in \mathbb N^5$ with a lexicographical combination of $>$
  where
  \begin{itemize}
  \item $l=|\atoms(\mGnd_B(N\cup U))|$, $u= 3^l - |\mGnd_B(U)|$, and $m= |M|$,
  \item in the case $D=\top$:
    \begin{itemize}
    \item[\textasteriskcentered] $s = 1+l-m$, $d=0$, and $r=0$,
    \end{itemize}
  \item otherwise if $D=\Delta\parallel D'\cdot\delta$:
    \begin{itemize}
    \item[\textasteriskcentered] $s=0$,
    \item[\textasteriskcentered] if $M=M',L$ with $L$ foreground literal then $r$ is the number of copies of $L$ in $D'\delta$
    \item[\textasteriskcentered] if the rightmost literal of $M$ is a background literal or if $M$ is empty then $r=0$
    \item[\textasteriskcentered] $d$ is the number of literals in $D'$
    \end{itemize}
  \end{itemize}
  The number of ground atoms $l=|\atoms(\mGnd_B(N\cup U))|$ is an upper bound to the length of the
  trail because the trail is consistent and no literal can appear
  more than once on the trail. 
  Similarly, every learned clause has at least one non-redundant
  ground instance so $|\mGnd_B(U)|$ increases whenever SCL(T) learns a new clause and 
  $3^l$ is an upper bound to the ground instances of all learned clauses in a regular run. 
  This means that Backtrack strictly decreases $u$, Decide, Propagate,
  and Conflict strictly decrease $s$ without modifying $u$,
  Skip strictly decreases $m$ without modifying $u$ or $s$,
  Resolve strictly decreases $r$ without modifying $u$, $s$, or $m$,
  and finally Factorize strictly decreases $d$ possibly decreases $r$ and
  does not modify $u$, $s$, or $m$.
\end{proof}

\begin{theorem}[Hierarchic Herbrand Theorem] \label{theo:hierherbrand}
  Let $N$ be a finite set of clauses.
  $N$ is unsatisfiable iff there exists a finite set $N' = \{ \Lambda_1 \parallel C_1, \ldots, \Lambda_n \parallel C_n\}$ of variable renamed
  copies of clauses from $N$ and a finite set $B$ of fresh constants and a substitution $\sigma$, grounding for $N'$ where
  $\cdom(\sigma) = B$ such that $\bigwedge_i \Lambda_i\sigma$ is $\theory^{\bspec}$ satisfiable and  $\bigwedge_i C_i\sigma$ is first-order unsatisfiable
  over $\sig^{\fspec}$.
\end{theorem}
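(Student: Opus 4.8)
The plan is to reduce both directions to the ground level by means of the sufficient-completeness fact recalled in Section~\ref{sec:prelim}: a pure clause set $N$ is unsatisfiable iff $\mGnd_B(N)$ is unsatisfiable for a sufficiently large finite set $B$ of constants. Since $N$ is finite this makes $\mGnd_B(N)$ a finite set of ground constrained clauses, so only finitely many background and foreground atoms occur. I will use two further facts repeatedly. First, for a set of ground foreground clauses over $\sig^{\fspec}$ (which has no equality) first-order unsatisfiability coincides with propositional unsatisfiability when distinct ground atoms are read as distinct propositional variables: a $\sig^{\fspec}$-model induces a satisfying propositional assignment, and conversely any propositional assignment is realised by a $\sig^{\fspec}$-algebra with a large enough carrier and pairwise distinct constants. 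Second, the equivalence noted after the calculus assumptions: a conjunction of background literals over $\opers^{\bspec}$ and fresh $\opers^{\fspec}$-constants is $\theory^{\bspec}$-satisfiable iff its existential closure over $\sig^{\bspec}$ holds in the background algebras.

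For the direction ``$\Leftarrow$'' I would assume $N'$, $B$, $\sigma$ as stated together with a hierarchic model $\sigval$ of $N$, and derive a contradiction. Then $\sigval \models_{\hspec} N'$, since the clauses of $N'$ are variable renamings of clauses of $N$. Since $\bigwedge_i \Lambda_i\sigma$ is $\theory^{\bspec}$-satisfiable, its existential closure holds in $\sigval|_{\sig^{\bspec}}\in\modset^{\bspec}$, so I can re-interpret the constants of $B$ --- fresh, hence absent from $N'$ --- inside the carrier of $\sigval$ to obtain a hierarchic algebra $\sigval^{+}$ with $\sigval^{+}\models\Lambda_i\sigma$ for all $i$ and still $\sigval^{+}\models_{\hspec} N'$. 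From $\sigval^{+}\models_{\hspec}\Lambda_i\parallel C_i$ and $\sigval^{+}\models\Lambda_i\sigma$ one gets $\sigval^{+}\models C_i\sigma$ for every $i$, so the $\sig^{\fspec}$-reduct of $\sigval^{+}$ models $\bigwedge_i C_i\sigma$, contradicting its first-order unsatisfiability.

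For the direction ``$\Rightarrow$'' I would assume $N$ unsatisfiable and pick, by the sufficient-completeness fact, a finite $B$ with $\mGnd_B(N)$ unsatisfiable. Fixing any background algebra together with an arbitrary interpretation of $B$ yields a $\theory^{\bspec}$-consistent truth assignment $\beta$ on the finitely many background atoms occurring in $\mGnd_B(N)$. Let $F$ be the set of foreground parts $C\tau$ of clauses $\Lambda\tau\parallel C\tau\in\mGnd_B(N)$ whose constraint is true under $\beta$. Then $F$ is propositionally --- hence first-order --- unsatisfiable, because otherwise a satisfying propositional assignment for $F$ together with the chosen background algebra would be a hierarchic model of $\mGnd_B(N)$. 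I then take a finite unsatisfiable $F$; for each of its clauses I fix an originating clause of $N$ with its grounding, let $N'$ be pairwise variable-disjoint renamed copies of these clauses, let $\sigma$ be the corresponding renamed grounding, and set $B:=\cdom(\sigma)$. Then $\bigwedge_i C_i\sigma=\bigwedge F$ is first-order unsatisfiable, while $\bigwedge_i\Lambda_i\sigma$ is a conjunction of constraints all true under $\beta$ and hence $\theory^{\bspec}$-satisfiable.

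The main obstacle I anticipate is the step in ``$\Leftarrow$'' that passes from $\theory^{\bspec}$-satisfiability of $\bigwedge_i\Lambda_i\sigma$ to satisfiability of the individual $\Lambda_i\sigma$ inside the fixed model $\sigval$ of $N$; this needs satisfiability of ground constraints to be uniform over $\modset^{\bspec}$, which is exactly what the Section~\ref{sec:prelim} equivalence provides (and is immediate for $\LRA$, where $\modset^{\bspec}$ is a singleton). The rest --- finiteness of the atom sets in play, the first-order/propositional bridge for ground foreground clause sets, and grounding distinct renamed copies of a single input clause independently --- should be routine.
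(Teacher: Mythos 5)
Your ``$\Leftarrow$'' direction is essentially fine: freshness of $B$, purity of $N$, and the Section~\ref{sec:prelim} equivalence ($\models_{\bspec}\exists\bar x\,\phi$ iff the grounding with fresh $\opers^{\fspec}$-constants is $\hspec$-satisfiable) do let you re-interpret the constants of $B$ inside a putative hierarchic model of $N$ and contradict first-order unsatisfiability of $\bigwedge_i C_i\sigma$. The problem is the very first step of your ``$\Rightarrow$'' direction. You read the preliminary remark as: if $N$ is unsatisfiable then $\mGnd_B(N)$ is unsatisfiable for some \emph{finite set $B$ of fresh foreground constants}. That statement is false. Take $N=\{x=0\parallel P(x),\; x=0\parallel\lnot P(x)\}$, which is pure, abstracted and unsatisfiable; for \emph{every} set $B$ of fresh constants, $\mGnd_B(N)$ is satisfiable, e.g.\ by interpreting all constants of $B$ as $1$, since then every instantiated constraint $a=0$ is false and every ground clause holds. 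Satisfiability of $\mGnd_B(N)$ quantifies over all interpretations of the fresh constants, so a model can simply steer them away from the critical background values; the instantiation-adequacy fact behind the remark in Section~\ref{sec:prelim} concerns grounding background-sorted variables with background (domain) terms of the term-generated theory, not with uninterpreted fresh constants. Consequently the set $F$ you build, and the subsequent extraction of $N'$, $\sigma$, never get off the ground.

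This gap is not cosmetic: bridging it is precisely the content of the theorem. One must choose the grounding $\sigma$ onto fresh constants so that variables which \emph{have} to share a background value are mapped to the \emph{same} constant (so that $\bigwedge_i C_i\sigma$ is already propositionally contradictory) while $\bigwedge_i\Lambda_i\sigma$ remains $\theory^{\bspec}$-satisfiable. The paper obtains this coordination from a hierarchic superposition refutation of $N$: it takes variable-disjoint copies $N'$ of the clauses used in the refutation, lets $\delta$ be the overall unifier of the refutation, and defines $\sigma$ with $x\sigma=y\sigma$ iff $x\delta=y\delta$, so that the ground refutation of the $C_i\sigma$ exists and the final constraint is satisfiable. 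Your argument contains no mechanism for making these identifications; an ``arbitrary interpretation of $B$'' cannot supply it. A secondary point: even where your lifting step is used (a propositional model of $F$ plus the chosen background algebra yields a hierarchic model of $\mGnd_B(N)$), you need the constants of $B$ to be interpreted pairwise distinctly, otherwise atoms over different constants denoting the same element may be forced to agree; this is harmless for $\LRA$ but not automatic under the paper's general assumptions, which allow finite carriers.
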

\begin{proof}
  Recall that $N$ is a pure, abstracted clause set and that $\theory^{\bspec}$ is term-generated, compact background theory that contains
  an equality $=$, and that all constants of the background signature are domain constants.
  Then by completeness of hierarchic superposition~\cite{BachmairGanzingerEtAl94}, $N$ is unsatisfiable iff
  there exists a refutation by hierarchic superposition. Let $N' = \{ \Lambda_1 \parallel C_1, \ldots, \Lambda_n \parallel C_n\}$ be a finite set renamed copies of clauses from $N$ such
  that there is a refutation by hierarchic superposition such that each clause in $N'$ and each derived clause is used exactly once.
  This set exists because the refutation is finite and any hierarchic superposition refutation can be transformed
  into a refutation where every clause is used exactly once. Now let $\delta$ be the overall unifier of this refutation.
  This unifier exists, because all clauses in $N'$ have disjoint variables and all clauses in the refutation are used exactly once.
  Now we consider a finite set of constants $B$ and a substitution $\sigma$,  $\cdom(\sigma) = B$, $\sigma$ grounding for $N'$, and
  for all $x,y\in\dom(\delta)$ we have $x\sigma = y\sigma$ iff $x\delta = y\delta$ . Now there is also a refutation for $N'\sigma$
  by hierarchic superposition where the clauses are inferred exactly in the way they were inferred for $N'$. It remains to be shown
  that  $\bigwedge_i \Lambda_i\sigma$ is $\theory^{\bspec}$ satisfiable and $\bigwedge_i C_i\sigma$ is $\sigval^{\hspec}$ unsatisfiable.
  The hierarchic superposition refutation terminates with the clause $\bigwedge_i \Lambda_i\sigma \parallel \bot$ where $\bigwedge_i \Lambda_i\sigma$
  is satisfiable. Furthermore, the refutation derives $\bot$ from $\{ C_1\sigma, \ldots, C_n\sigma\}$ via superposition, proving the theorem. 
\end{proof}

  

Finally, we show that an unsatisfiable clause set can be refuted by $\SCLT$ with any regular run if we start
with a sufficiently large sequence of constants $B$ and apply Decide in a fair way. In addition, we need a Restart
rule to recover from a stuck state.

\bigskip
\shortrules{Restart}
{$(M;N;U;B;k;\top)$}
{$(\epsilon;N; U;B;0;\top)$}
{}{\SCLT}{11}

Of course, an unrestricted use of rule Restart immediately leads to non-termination.

\begin{theorem}[Refutational Completeness of $\SCLT$] \label{theo:groununsatcomple}
  Let $N$ be an unsatisfiable clause set.
  Then any regular $\SCLT$ run will
  derive the empty clause provided
  (i)~Rule Grow and Decide are operated in a fair way, such that all possible trail prefixes of all considered sets $B$ during the run are eventually explored, and
  (ii)~Restart is only applied to stuck states.
\end{theorem}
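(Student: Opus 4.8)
The plan is to combine the Hierarchic Herbrand Theorem (Theorem~\ref{theo:hierherbrand}) with the termination result (Lemma~\ref{thm:termination}) and the structural analysis of stuck states (the Proposition on the Form of Stuck States, together with Lemma~\ref{lem:stuck-models}). Since $N$ is unsatisfiable, Theorem~\ref{theo:hierherbrand} yields a finite set $N_0$ of variable-renamed copies of clauses of $N$, a finite set $B_0$ of fresh constants, and a grounding substitution $\sigma_0$ with $\cdom(\sigma_0)=B_0$ such that $\bigwedge_i\Lambda_i\sigma_0$ is $\theory^{\bspec}$-satisfiable while $\bigwedge_i C_i\sigma_0$ is first-order unsatisfiable over $\sig^{\fspec}$. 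Because $B$ is grown fairly, after finitely many applications of Grow we reach a set $B$ with $B_0\subseteq B$; from that point on $\mGnd_{B}(N)$ contains the refuting ground instances $C_i\sigma_0$ (modulo renaming of the $B_0$-constants into $B$-constants), so $\mGnd_{B}(N)$ is hierarchically unsatisfiable. Fix this $B$ and argue that no further Grow is needed; by Lemma~\ref{thm:termination} every regular run using Grow only finitely often terminates, so the run must reach either a state with $D=\Lambda\parallel\bot\cdot\sigma$ — which is the desired empty clause — or a stuck state.

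The core of the argument is to rule out the stuck-state alternative once $B\supseteq B_0$. In a regular run Conflict has precedence over all other rules, so a stuck state has $D=\top$ and, by the Proposition on the Form of Stuck States, every ground foreground literal built from foreground literals of $N$ with constants from $B$ is defined in $M$; moreover by Lemma~\ref{lem:stuck-models} we have $M\land\adiff(B)\models\mGnd_{B}(N\cup U)$. In particular $M$ induces a hierarchic model of $\mGnd_{B}(N)$: the background part $\backgd(M)\land\adiff(B)$ is $\theory^{\bspec}$-satisfiable by Lemma~\ref{lem:sclt-pres-wfs}.\ref{wf-2}, and the foreground literals in $M$ give a total, consistent assignment to all relevant ground atoms that satisfies every clause of $\mGnd_{B}(N)$. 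This contradicts hierarchic unsatisfiability of $\mGnd_{B}(N)$. Hence after $B\supseteq B_0$ no stuck state is reachable, and since termination still holds, the run must terminate in the empty clause. The Restart rule, applied only to stuck states by hypothesis~(ii), is invoked only finitely often before $B\supseteq B_0$ (each Restart is preceded by progress toward a stuck state, and once $B\supseteq B_0$ stuck states disappear), so termination is not destroyed; and the fairness condition~(i) on Decide guarantees that within a fixed $B$ all trail prefixes are eventually explored, so if the current $B$ already suffices the run cannot loop forever avoiding a conflict.

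The main obstacle is the bookkeeping around the interaction of Grow, Restart, and termination: Lemma~\ref{thm:termination} assumes Grow is used only finitely often, but a priori a regular run could keep growing $B$ and restarting. The fairness requirement~(i) is exactly what is needed here — it forces that the run does not postpone forever either exploring all trail prefixes of the current $B$ or, if that $B$ is insufficient, applying Grow; combined with the fact that $B_0$ is finite, this ensures that after finitely many Grow steps we are in the regime $B\supseteq B_0$, and thereafter Grow (and Restart, which only fires on stuck states) is never triggered again, so Lemma~\ref{thm:termination} applies. A secondary subtlety is matching the renamed fresh constants $B_0$ from Theorem~\ref{theo:hierherbrand} against the constants actually added by Grow; this is handled by noting that the completeness of hierarchic superposition, hence the conclusion of Theorem~\ref{theo:hierherbrand}, is insensitive to the particular choice of fresh constant names, so any $B$ of sufficiently large cardinality with the $\adiff(B)$ constraint works. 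I would present these two points carefully and treat the model-extraction and termination invocations as direct appeals to the already-established lemmas.
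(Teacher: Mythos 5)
Your route — wait until $B\supseteq B_0$, argue that Grow and Restart then never fire again, invoke Lemma~\ref{thm:termination}, and rule out stuck states via Lemma~\ref{lem:stuck-models} — differs from the paper's proof, and its central step fails. From Theorem~\ref{theo:hierherbrand} you infer that $\mGnd_B(N)$ is hierarchically unsatisfiable once $B\supseteq B_0$, so that a stuck state (which by Lemma~\ref{lem:stuck-models} and Lemma~\ref{lem:sclt-pres-wfs} yields a satisfying interpretation of $\mGnd_B(N\cup U)$) would be impossible. But Theorem~\ref{theo:hierherbrand} only provides \emph{one} assignment to the fresh constants under which $\bigwedge_i\Lambda_i\sigma_0$ holds while the $C_i\sigma_0$ are propositionally contradictory; a hierarchic interpretation of $\mGnd_B(N)$ is free to interpret the constants of $B$ so that the constraints become false, satisfying the ground clauses vacuously. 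Concretely, for $N=\{x=0\parallel P(x)\lor Q(x),\; x=0\parallel P(x)\lor\lnot Q(x),\; x=0\parallel \lnot P(x)\lor Q(x),\; x=0\parallel\lnot P(x)\lor\lnot Q(x)\}$ and $B=[a]$ (already large enough for the Herbrand instances), the regular run that decides $P(a)$ together with the background literal $a\neq 0$ and then decides $Q(a)$ is stuck: every constraint $a=0$ is unsatisfiable with the trail, so Conflict never applies, yet $N$ is unsatisfiable. Stuck states therefore do occur even when $B$ is large enough — that is precisely why the theorem includes Restart and fairness condition (i). (The preliminary remark that pure $N$ is unsatisfiable iff $\mGnd_B(N)$ is unsatisfiable for sufficiently large $B$ has to be read as shorthand for Theorem~\ref{theo:hierherbrand}; taken literally over free constants it would fail on this example, and the paper's proof does not use it in the form you need.)

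Because stuck states cannot be excluded, the surrounding bookkeeping also collapses: Restart will genuinely fire, and nothing in hypotheses (i)–(ii) prevents further applications of Grow after $B\supseteq B_0$ (fairness only demands that all trail prefixes are eventually explored, not that Grow ceases), so Lemma~\ref{thm:termination}, which presupposes finitely many Grow steps, cannot simply be invoked. The paper argues instead by a progress measure that is insensitive to Restart and Grow: by fairness the run eventually builds, within a set $B$ containing enough constants, trails consisting solely of literals of $N'\sigma$; on these trails the background literals are jointly satisfiable by the choice of $\sigma$, so such trails can never be stuck and, since $\bigwedge_i C_i\sigma$ is propositionally unsatisfiable, always end in Conflict; each such conflict learns, by Lemma~\ref{lemm:non-red}, a non-redundant clause that propagates at a strictly smaller level, and since $U$ survives both Restart and Grow, the multiset of levels attached to these finitely many trails strictly decreases with each corresponding Backtrack, the conflict at level zero producing $\Lambda\parallel\bot$. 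To repair your proposal you would have to replace the ``no stuck states once $B\supseteq B_0$'' claim by such an argument restricted to the distinguished $N'\sigma$-trails.
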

\begin{proof}
  If $N$ is unsatisfiable then by Theorem~\ref{theo:hierherbrand} there exists a
  a finite set $N' = \{ \Lambda_1 \parallel C_1, \ldots, \Lambda_n \parallel C_n\}$ of variable renamed
  copies of clauses from $N$ and a finite set $B$ of fresh constants and a substitution $\sigma$, grounding for $N'$ where
  $\cdom(\sigma) = B$ such that $\bigwedge_i \Lambda_i\sigma$ is $\theory^{\bspec}$ satisfiable and  $\bigwedge_i C_i\sigma$ is first-order unsatisfiable
  over $\sig^{\fspec}$. If the $\SCLT$ rules are applied in a fair way, then they will in particular produce trails solely consisting of literals from $N'\sigma$.
  For these trails all theory literals are satisfiable, because $\bigwedge_i \Lambda_i\sigma$ is $\theory^{\bspec}$ satisfiable.
  Furthermore, the states corresponding to these trails cannot end in a stuck state, because this contradicts the unsatisfiability of $\bigwedge_i C_i\sigma$.
  Instead, they all end in a conflict with some clause in $N'\sigma$.
  In addition, there are only finitely many such trails, because the number of literals in $N'\sigma$ is finite.
  Now let $\mu((M;N;U;B;k;\top))$ be the multiset of the levels of all states with trails from $N'\sigma$ until a conflict occurs.
  Each time a state with a trail from $N'\sigma$ results in a conflict, $\SCLT$ learns a non-redundant clause that propagates at a strictly smaller level, Lemma~\ref{lemm:non-red}.
  Thus $\mu((M;N;U;B;k;\top))$ strictly decreases after each Backtrack step after a conflict on a trail with atoms from $N'\sigma$.
  The clause learnt at level zero is the empty clause.
\end{proof}

Condition~(i) of the above theorem is quite abstract. It can, e.g., be made effective by applying rule Grow only after all possible trail prefixes with respect to the current set $B$ have been explored and to make sure that Decide does not produce the same stuck state twice.



\section{$\SCLT$ Extensions and Discussion} \label{sec:extdisc}

Example~\ref{exa:modelextract} demonstrates that stuck $\SCLT$ states can sometimes
be turned into models of the considered clause set. There exist decidable fragments where this
can be done systematically. Consider a pure $\BS(\LRA)$ clause set
$N$ where arithmetic atoms have the form $x \# q$, $x \# y$ or $x - y \# q$
where $q$ is a domain constant and $\#\in \{\leq, <, \not =, =, >, \geq\}$. Furthermore, if
a clause contains a difference constraint $x - y \# q$, then $x$ and $y$ are
in addition bounded from below and above in the respective clause constraint. The resulting fragment is called BS(BD)~\cite{Voigt17Frocos} and satisfiability
of clause sets in this fragment is decidable. This can be effectively done by searching for a refutation with respect to an \textit{a priori}
fixed set of constants $B$. The constants in $B$ are all different, ordered, and enjoy additional bounds with respect to the domain
constants occurring in $N$~\cite{Voigt17Frocos}.
The clause set is then unsatisfiable iff it is unsatisfiable after instantiation with $B$. Otherwise
a model can be generated, i.e., BS(BD) enjoys an abstract finite model property.
The existence of a refutation can be checked by $\SCLT$ where the
additional bounds on the constants in $B$ can be introduced with the first decision.
Actually, \cite{Voigt17Frocos} was
part of our motivation for the model building in $\SCLT$.

There are further extensions to pure clause sets that still enable
a refutationally complete calculus. In particular, first-order function symbols that do not range
into a background theory sort and equality. The properties of the $\SCLT$
calculus rely on finite trails with respect to a fixed, finite set $B$ of constants.
By adding non-constant first-order function symbols trails will typically
be infinite without further restrictions. Finite trails can, e.g., still be obtained
by limiting nestings of function symbols in terms. Thus it seems to us that an
extension to first-order function symbols that do not range
into a background theory sort should be possible while keeping the properties of $\SCLT$.
From an abstract point of view, also the addition of equality on the first-order
side should be possible, because there exist complete procedures such as hierarchic superposition~\cite{BachmairGanzingerEtAl94,BaumgartnerWaldmann19}.
Then also foreground function symbols may range into a background theory sort, but the respective terms have to satisfy further conditions in order
to preserve completeness.
However, even in the pure first-order case there has not been a convincing solution so
far of how to combine equational reasoning with explicit model building. One challenge
is how to learn a clause from a conflict out of a partial model assumption that enjoys superposition style ordering restrictions
on terms occurring in equations. If this can be sufficiently solved, the respective calculus should also be extendable to a hierarchic set up.

An efficient implementation of $\SCLT$ requires efficient algorithmic solutions
to a number of concepts out of the theory. For fast model building an efficient
implementation of Propagate is needed. This was our motivation for adding the all-different
constraints on the constants, because they enable syntactic testing for complementary or defined
literals. In addition, satisfiability of constraints needs to be tested. The trail behaves like a stack
and it is ground. This
fits perfectly the strengths of SMT-style satisfiability testing.
Dealing with the non-domain constants out of the set $B$ needs some care. They behave completely symmetric with
respect to the instantiation of clauses in $(N\cup U)$. An easy way to break symmetry here is the addition of
linear ordering constraints on these constants. If more is known about the specific fragment some clause set $N$
belongs to, additional constraints with respect to the constraints or domain constants out of $(N\cup U)$ may
be added as well. This is for example the case for the BS(BD) fragment, see above.
Exploring all trail prefixes, as required by Theorem~\ref{theo:groununsatcomple}, requires
book-keeping on visited stuck states and an efficient implementation of the rule Restart. The former can be done
by actually learning new clauses that represent stuck states. Such clauses are not logical consequences out
of $N$, so they have to be treated special. 
In case of an application of Grow all these clauses and all the consequences
thereof have to be updated. An easy solution would be to forget the clauses generated by stuck states. This can be efficiently implemented.
Concerning the rule Restart, from the SAT world it is known that restarts do not have to be total~\cite{RamosEtAl11}, i.e., if a certain
prefix of a trail will be reproduced after a restart, it can be left on the trail. It seems possible to extend this concept towards $\SCLT$.

There seems to be the possibility to learn more from a stuck state than just a clause preventing it for the current run until
the next application of Grow. For example, if it turns out that the finite model out of a stuck state cannot be extended to an overall
model, it might trigger some non-redundant inference, similar to the InstGen calculus~\cite{GanzingerKorovinEtAl03}, or result in further constraints for
the constants out of $B$, or actually show that Grow needs to be applied in order to find a model. Finding answers to these
problems is subject to future research.

It was enough to restrict $\SCLT$ to regular runs, Definition~\ref{def:regular-run}, to prove the termination of our calculus. 
An actual implementation might benefit from further restrictions to the explored sequences of rule applications. 
For SAT solving, one of those restrictions is a focus on greedy propagation. 
The same might not be true for $\SCLT$. 
Still, we assume that a restriction that enforces some progress through propagation might be necessary, e.g., the rule Decide is only applied 
if the rule Propagate has been applied at least once to all applicable clauses since the last application of Decide or the the initial state. 
We also assume that some kind of fairness guarantee must be kept to prevent a small subset of clauses from causing the majority of propagated literals.
But these are for sure not the only options.

In summary, we have presented the new calculus $\SCLT$ for pure clause sets of first-order logic modulo a background theory.
The calculus is sound and complete. It does not generate redundant clauses. There is a fair strategy that turns it into
a semi-decision procedure. It constitutes a decision procedure for certain decidable fragments of pure clause sets, such as BS(BD).




\smallskip\noindent
{\bf Acknowledgments:} This work was funded by DFG grant 389792660 as part of
\href{http://perspicuous-computing.science}{TRR~248}.

\bibliographystyle{alpha}
\bibliography{paper}

\end{document}